\theoremstyle{plain}
\newtheorem{theorem}{Theorem}[section]
\newtheorem{lemma}[theorem]{Lemma}
\theoremstyle{definition}
\newtheorem{assumption}[theorem]{Assumption}
\theoremstyle{remark}
\newcommand{\camera}[1]{#1}
\def\E{\mathbb{E}}
\def\cov{\mathrm{Cov}}
\newcommand\indep{\protect\mathpalette{\protect\independenT}{\perp}}
\def\independenT#1#2{\mathrel{\rlap{$#1#2$}\mkern2mu{#1#2}}}
\DeclareMathOperator*{\argmax}{arg\,max}
\DeclareMathOperator*{\argmin}{arg\,min}
\icmltitlerunning{Falsification of Unconfoundedness by Testing Independence of Causal Mechanisms}
\begin{document}

\twocolumn[
\icmltitle{Falsification of Unconfoundedness by Testing \\ Independence of Causal Mechanisms}

\icmlsetsymbol{equal}{*}

\begin{icmlauthorlist}
\icmlauthor{Rickard K.A. Karlsson}{delft}
\icmlauthor{Jesse H. Krijthe}{delft}
\end{icmlauthorlist}

\icmlaffiliation{delft}{Department of Intelligent Systems, Delft University of Technology, the Netherlands}

\icmlcorrespondingauthor{Rickard Karlsson}{r.k.a.karlsson@tudelft.nl}

\icmlkeywords{causal inference,observational data,falsification,unmeasured confounding,independent causal mechanisms}

\vskip 0.3in
]



\printAffiliationsAndNotice{}  

\begin{abstract}
A major challenge in estimating treatment effects in observational studies is the reliance on untestable conditions such as the assumption of no unmeasured confounding. In this work, we propose an algorithm that can falsify the assumption of no unmeasured confounding in a setting with observational data from multiple heterogeneous sources, which we refer to as environments. Our proposed falsification strategy leverages a key observation that unmeasured confounding can cause observed causal mechanisms to appear dependent. Building on this observation, we develop a novel two-stage procedure that detects these dependencies with high statistical power while controlling false positives. The algorithm does not require access to randomized data and, in contrast to other falsification approaches, functions even under transportability violations when the environment has a direct effect on the outcome of interest. To showcase the practical relevance of our approach, we show that our method is able to efficiently detect confounding on both simulated and semi-synthetic data.
\end{abstract}

\section{Introduction}
Using observational studies to estimate treatment effects is a ubiquitous yet challenging task in many disciplines, such as medicine~\citep{hernan2006estimating} or social sciences~\citep{athey2017state}. Whereas there exists a rich literature of methods for treatment effect estimation in the observational setting~\citep{bang2005doubly, wager2018estimation, chernozhukov2018double}, all methods have in common that before a causal effect can be estimated, often untestable conditions need to hold. One such condition is that we assume there is \textit{no unmeasured confounding}, meaning that there are no unobserved factors that have both an influence on the treatment and on the outcome of interest that are not accounted for by the method. If unmeasured confounders are present, our treatment effect estimates are likely to be biased and inconsistent~\citep{greenland1999confounding}. This can have serious downstream consequences such as unknowingly recommending a non-effective or, even worse, potentially harmful treatment policy. Unfortunately, without making further assumptions, it is in general impossible to verify all assumptions needed to identify treatment effects from observational data.

In this work, we investigate a novel strategy for falsifying unconfoundedness. Specifically, we focus on the common scenario where observational datasets are collected from different heterogeneous sources, which we refer to as \textit{environments}. Each environment corresponds to distinct study populations, due to factors such as geographical differences that results in distribution shifts between the environments. We propose a falsification strategy based on the assumption that these distribution shifts stem from independent changes in the underlying causal mechanisms. This idea is grounded in the principle of independent causal mechanisms (ICM)~\citep{janzing2012information, peters2017elements}, which posits that a causal system comprises autonomous modules that do not inform or influence each other. Assuming independent causal mechanisms has been leveraged to, for instance, improve causal structure learning~\citep{huang2020causal, guo2024causal} and understand model behavior in statistical machine learning~\citep{scholkopf2012anticausal}. However, the implications of assuming independent causal mechanisms to treatment effect estimation problems has received far less~attention. 

Our proposed falsification strategy leverages a key observation that unmeasured confounding can cause observed mechanisms to appear dependent~\citep{janzing2018detecting,karlsson2023detecting,mameche2024identifying,reddy2024detectingmeasuringconfoundingusing}. If we assume that the underlying causal mechanisms should be independent, contrary to what is observed, it follows that any apparent dependencies could be the result of unmeasured confounding. This observation motivates the central research question of this paper: \textit{How can we efficiently test causal mechanism independencies to falsify the conditions required for treatment effect estimation in settings with multi-environment data?} 

\paragraph{Contributions} \camera{By formalizing the problem using a Neyman-Rubin causal model for multi-environment data, we show that falsification of unconfoundedness is possible by testing dependencies between causal mechanisms directly by combining the principle of independent causal mechanisms with functional assumptions on the mechanisms.} In this model, we prove that the presence of unmeasured confounding has testable implications in the form of dependencies between the model's observed parameters.  Using our theoretical results, we introduce new algorithmic ideas that can be used for falsification: in particular, we propose a two-stage algorithm that statistically tests statistical dependencies between learned model parameters of the treatment assignment and outcome mechanism. We show that our algorithm performs favorably compared to alternative approaches on both simulated and semi-synthetic data.. To showcase the potential applications of our algorithm and clarify what constitutes an ``environment'', we provide two illustrative examples where we envision our algorithm being~used.

\paragraph{Example 1}
In a meta-analysis of multiple observational studies with individual participant data~\citep{riley2010meta}, our algorithm can jointly test whether an unmeasured confounder is present between the treatment and outcome across all studies. Here, each observational study serves as a distinct environment.

\paragraph{Example 2}
In a single observational analysis involving a multi-level structure in which individuals are nested in clusters and non-randomly assigned to a treatment/control on an individual level, such as students from different schools~\citep{leite2015evaluation} or patients from different hospitals~\citep{goldstein2002multilevel}, our algorithm can test whether the conditions necessary to identify treatment effects are violated within each cluster due to unmeasured confounding. In this context, the environment refers to the sub-populations within each cluster of the same observational study.

\section{Related Works}

When discussing the validity of causal assumptions, sensitivity analysis might come to mind. In sensitivity analysis, one hypothesizes departures from the assumption of no unmeasured confounding and investigates how different biases would arise depending on the hypothesized confounder's relationship with treatment and outcome~\citep{cornfield1959smoking, tan2006distributional, vanderweele2017sensitivity}. \camera{This typically results in bounds on the treatment effect, which is an instance of partial identification~\citep{manski2003partial}}. However, while sensitivity analysis probes `what-if' scenarios regarding potential unmeasured confounding (a process that can always be undertaken), falsification aims to empirically test whether assumptions are violated, based on the observable implications of those assumptions (which is not always feasible). \camera{For instance, falsification may involve testing the validity of instrumental variables~\citep{pearl1995testability} or evaluating the compatibility of learned causal structures with observed data~\citep{faller2024compatibility}.} In this way, sensitivity analysis and falsification are complementary: the former explores possible scenarios, while the latter seeks direct empirical evidence for these scenarios. Despite this, falsification has received comparatively less attention in the literature.

One line of research on falsification in observational causal inference assumes that certain transportability conditions hold, allowing causal effects to be transferred between different environments~\citep{dahabreh2020benchmarking,hussain2022falsification,hussain2023falsification}. The basic premise is that, under transportability conditions, comparing treatment effect estimates from multiple observational studies, or from a single observational study and a randomized one, should yield consistent results. If inconsistencies are found, this can be used to falsify the identifiability conditions, assuming the transportability assumptions hold. This idea has been further extended to time-to-event outcomes with censoring~\citep{demirel2024benchmarking}, as well as for quantifying bias from unmeasured confounding~\citep{de2024detecting, de2024hidden}. In contrast, our approach assumes independence of causal mechanisms, which does not require transportable treatment effects or access to randomized data. 

Testing for independence of causal mechanisms has been applied in previous work to falsify causal assumptions, such as detecting hidden confounding~\citep{karlsson2023detecting} or testing the validity of instrumental variables~\citep{burauel2023evaluating,karlsson2023putting}. Most similar to our work is that of \citet{karlsson2023detecting}, though their method relies on conditional independence testing which is a notoriously difficult statistical problem in itself~\citep{shah2020hardness}. To avoid the challenges of conditional independence testing--for instance, losing statistical power as the adjustment set becomes larger--we address this problem by proposing an alternative method that does not rely on conditional independence testing.

\camera{Parallel to our ideas on falsification, other approaches have been proposed for detecting or addressing unmeasured confounding, under various assumptions on the setting and data-generating process. For example, when multiple causes are observed~\citep{wang2019blessings,d2019multi} or when a negative control is available~\citep{lipsitch2010negative}.}

Finally, our work investigates the implications of the principle of independent causal mechanisms, which has a rich literature in causal discovery, particularly in multi-environment settings~\citep{huang2020causal, perry2022causal,guo2024causal,mameche2024learning}. \camera{A closely related line of research assumes the existence of invariant mechanisms across environments~\citep{peters2016causal}. In contrast, our approach explicitly allows these mechanisms to vary--and, as we will show, such variation is sometimes necessary to enable falsification.} Rather than aiming to learn the entire causal structure as typically done in causal discovery, our approach focuses on verifying specific aspects of a partially known structure that is relevant for treatment effect estimation. Recently,~\citet{guo2024finetti} examined how independent causal mechanisms can lead to identification of certain treatment effects, though they did not address scenarios where causal assumptions are violated, such as in the presence of unmeasured confounders, which we study~here.

\section{Setup}

\subsection{Notation \& data structure}

For each individual $i=1,\dots, n$, we observe baseline covariates $X_i$ in $\mathcal{X}\subseteq \mathbb{R}^d$, a treatment $A_i$ in $\mathcal{A}\subseteq\mathbb{R}$ and outcome $Y_i$ in $\mathcal{Y}\subseteq\mathbb{R}$. We allow the treatment and outcome to be binary or continuous; but to simplify exposition, we will mainly show our results for the continuous case and then discuss how to modify our theory for binary treatments and outcomes when appropriate. We consider observations to be collected from $K$ different environments labeled with $S_i\in\{1,\dots, K\}$ where $K\geq 2$. We denote $n_s$ as the number of observations from environment $\{S=s\}$ and we define $n=\sum_{s=1}^K n_s$ as the total number of observations. Each observation therefore consists of the tuple $O_i=(X_i, S_i, A_i, Y_i)$. Throughout the paper, we will use capitalized letters to denote random variables and small letters to denote their realized values. 

We are considering a setting with a composite dataset of observations from separate environments. Each environment represent a different study population where the sampling probability of individual $i$ belonging to environment $\{S=s\}$ can be unknown; this setting is referred to as a non-nested study design~\citep{dahabreh2020extending} . Formally, we consider observations within an environment $\{S=s\}$ to be sampled independently and identically (i.i.d) according to some distribution $(X,A,Y)\sim P(X,A,Y\mid S=s)$. This distribution may vary across the different environments $s\in\{1,\dots,K\}$. Importantly, observations are not assumed to be i.i.d. if we consider the marginal distribution $P(X,A,Y)$ over all environments. Furthermore, we assume that the environments are related to each other by having a shared, albeit unknown, causal structure, that is: the causal directed acyclic graph (DAG) between the variables $(X,S,A,Y)$ is the same for all $S\in\{1,\dots, K\}$.

\subsection{Assumptions for identification of causal effects}
To define causal effects of interest, we use potential (counterfactual) outcomes~\citep{rubin1974estimating}. For an individual $i$, we posit the potential outcome $Y_i^a$ for $a\in\mathcal{A}$ which
denotes the outcome under an intervention to set treatment $A_i$ to $a$. For the typical causal analysis in a non-nested study design, the goal is often to estimate the average treatment effect or conditional average treatment effect between two different treatments $a,a'\in\mathcal{A}$ in the underlying population from an environment $\{S=s\}$, that is $\tau_s=\E[Y^{a}-Y^{a'}\mid S=s]$, resp.  $\tau_s(x)=\E[Y^{a}-Y^{a'}\mid X=x, S=s]$.
It is well-known that under certain conditions $\tau_s$ and $\tau_s(x)$ are identified from the observations in environment $\{S=s\}$.
\begin{assumption}\label{asmp:internal_validity} We assume the following conditions for each environment $s=1,\dots,K$.
\textit{Consistency:} if $A_i = a,$ then $Y^a_i = Y_i$, for every individual $i$ and every treatment $a \in \mathcal A$.
\textit{Positivity:} for each treatment $a \in \mathcal A$, if $f(x, S=s) \neq 0$, then $\Pr(A = a | X = x, S = s) > 0$.
\textit{Unconfoundedness:} for each $a \in \mathcal A$, $Y^a \indep A | (X, S=s)$. 
\end{assumption}
Consistency is satisfied when the treatment is clearly defined, ensuring that no hidden treatment variation exist and that there is no interference between individuals.
Positivity requires that every possible covariate pattern in the environment ${S=s}$ has a nonzero probability of receiving each possible treatment option.
Unconfoundedness, also referred to as conditional exchangeability, implies there is no unmeasured confounding. That is, the covariates $X$ are sufficient to adjust for in order to identify the causal effect of $A$ on $Y$. In observational studies, assuming unconfoundedness is often considered controversial, requiring strong domain expertise to justify its validity.

When the conditions in Assumption~\ref*{asmp:internal_validity} are met, both the average treatment effect and the conditional average treatment effect can be identified from the observed data~\citep{hernan2020causal}. Let $\mu_{a,s}(X)=\E[Y\mid X, A=a, S=s]$, then a statistical estimand for the ATE is $\tau_s=\E\left[\mu_{a,s}(X)-\mu_{a',s}(X) \mid S=s\right]$ and for the CATE is $\tau_s(X)=\mu_{a,s}(X)-\mu_{a',s}(X)$. Rather than focusing on how to estimate these estimands from data, we will concentrate on how to assess the validity of the conditions that allow us to identify them in the first place. Specifically, in the context of data from multiple environments, we will demonstrate that Assumption~\ref*{asmp:internal_validity} can be falsified under certain conditions related to distributional shifts across the different~environments.

\section{A novel falsification strategy}

\subsection{Assumptions on environment changes}

We consider a general class of models of the treatment and potential outcomes, namely: all linear functions of the feature representations $\psi(X) : \mathcal{X} \rightarrow \mathbb{R}^{z}$ and $\phi(X,A) : \mathcal{X} \times \mathcal{A} \rightarrow \mathbb{R}^{z'}$,
\begin{equation} \label{eq:dgp}
    \begin{aligned}
        A &= \alpha_{s}^\top \psi(X) + \varepsilon_A\\
        Y^{a} &= \beta_{s}^\top \phi(X,A=a) + \varepsilon_Y~
    \end{aligned}
\end{equation}
where the noise variables fulfill $\E[\varepsilon_A \mid X, S=s]=0$ and $\E[\varepsilon_Y \mid X,A, S=s]=0$. Additionally, under Assumption~\ref*{asmp:internal_validity}, the noise variables are independent $\varepsilon_A\indep \varepsilon_Y \mid S$. 

The function class described in~\eqref{eq:dgp} encompasses a wide range of complex models, particularly because the representations $\psi(X)$ and $\phi(X,A)$ can involve nonlinear transformations of the variables $(X, A)$. Although we focus on continuous treatment and outcome values to illustrate the core ideas of our falsification strategy, this framework can be extended to generalized linear models that accommodates binary/categorical values. For instance, we can include binary treatment by defining $P(A=1 \mid X, S=s) = h^{-1}(\alpha_s^\top \psi(X))$, where $h(p) = \ln(p / (1 - p))$ is the logit link function~\citep{mccullagh1989generalized}.

Distributional changes between environments are accounted for by~\eqref{eq:dgp} through allowing 
the parameters $\alpha_{s}\in\mathbb{R}^{z}$ and $\beta_{s}\in\mathbb{R}^{z'}$ to change for different environments $s\in\{1,\dots,K\}$, in addition to changes in the covariate distribution $P(X\mid S=s)$. Changes in the parameters $(\alpha_s,\beta_s)$ correspond to shifts in the treatment assignment mechanism $\E[A\mid X, S=s] = \alpha_{s}^\top \psi(X)$ and outcome mechanism $\E[Y^a\mid X, S=s]=\beta_{s}^\top \phi(X,A=a)$; the feature representations $\psi(X)$ and $\phi(X,A)$ are considered to be fixed across environments. In practice, changes in the treatment assignment and outcome mechanisms are often expected. For example, if an unmeasured effect modifier exists, the outcome mechanism $\E[Y^a \mid X, S = s]$ will vary if the distribution of unmeasured effect modifiers differs across environments~\citep{dahabreh2020extending}. Similarly, variation in the treatment assignment $\E[A \mid X, S = s]$ can be expected due to factors like differences in treatment policies across environments.

We will now pose the main assumption on how changes in the mechanism parameters $(\alpha_s, \beta_s)$ occur. Specifically, we assume there exists an unknown \textit{prior distribution} $P(\alpha, \beta)$ that fulfills the following condition.
\begin{assumption}\label{asmp:independence}
    The parameters $(\alpha_s,\beta_s)\sim P(\alpha, \beta)$ are drawn independently for each $s=1,\dots,K$. Furthermore, the parameters are independent from each other such that $P(\alpha, \beta)=P(\alpha)P(\beta)$.
\end{assumption}
Following the principle of independent causal mechanisms, Assumption~\ref*{asmp:independence} states that the parameters $(\alpha_s,\beta_s)$ are \textit{uninformative} of each other as they are sampled independently, and furthermore, that changing $\alpha$ has \textit{no influence} on $\beta$, and vice versa. In the language of structural causal models, sampling $(\alpha_s,\beta_s)$ should be seen as independent soft interventions on the distribution $P(X,A,Y \mid S=s)$. In the broader statistical context, Assumption~\ref*{asmp:independence} can also be related to hierarchical regression models with a prior independence assumption, see e.g. \citet[Chapter~11]{gelman2007data}. Here, the independent sampling of the parameters $(\alpha_s,\beta_s)$ resembles the way hierarchical models account for variability between  environments. 

\camera{Finally, we will contrast our approach with falsification strategies based on transportability, which for instance would assume that the outcome mechanism $\E[Y^a \mid X, S = s]$ remains invariant across environments $S$. This assumption can be violated when unmeasured effect modifiers differ in distribution across environments, causing $\E[Y^a \mid X, S = s]$ to vary. In contrast, Assumption~\ref*{asmp:independence} does not require such invariance and explicitly allows this causal mechanism to vary. As a result, even when transportability fails to hold, Assumption~\ref*{asmp:independence} may still hold. We will later show that this makes our proposed falsification robust to violations of transportability, whereas transportability-based strategies may yield false positives: that is, incorrectly rejecting unconfoundedness despite the absence of unmeasured confounding. For a more detailed discussion of transportability-based falsification strategies, see Appendix~\ref{app:transportability}.}

\subsection{A testable implication under the independence assumption}

\camera{To focus on the core ideas and limits of our falsification strategy, we assume the feature representations $\phi$ and $\psi$ are known up to some permutation and element-wise scaling. Moreover, to allow the use of standard estimation techniques, we will require the dimensionality of the feature representations to not be larger than any of the individual sample sizes among the different environments. We formalize these two conditions as follows.
\begin{assumption} \label{asmp:correct_representation}
    We have access to $\widetilde{\phi}(X)=C\phi(X)$ and $\widetilde{\psi}(X,A)=D\psi(X,A)$ where $C\in\mathbb{R}^{z\times z}$ and $D\in\mathbb{R}^{z'\times z'}$ are invertible matrices. The dimensionality of the feature representations $z, z' \in \mathbb{N}$ is finite and lower than the smallest sample size across environments, i.e, $z, z' < \min_s n_s$.  
\end{assumption}} 

Our proposed falsification strategy will rely on estimating $\E[A \mid X, S=s]$ and $\E[Y \mid X, A, S=s]$ which under Assumption~\ref*{asmp:internal_validity} corresponds to the true treatment and outcome mechanism. To estimate these conditional expectations, we employ two statistical working models $e_s(X) = \omega_s^\top \widetilde{\phi}(X)$ and $h_s(X, A) = \gamma_s^\top \widetilde{\psi}(X, A)$, respectively. Since we replaced the unknown feature representations $\{\phi, \psi\}$ with the observed feature representations $\{\widetilde\phi, \widetilde\psi\}$, the mechanism parameters $(\alpha_s, \beta_s)$ are replaced by $(\omega_s, \gamma_s)$. Our falsification strategy will test a statement equivalent to Assumption~\ref*{asmp:independence} but, again, substituting $(\alpha_s, \beta_s)$ with $(\omega_s, \gamma_s)$ as follows,
\begin{equation} \label{eq:testable_independence}
    H_0 : P(\omega
,\gamma)=P(\omega)P(\gamma)~.
\end{equation}
To understand how our falsification strategy will be centered around testing this null hypothesis, we begin by establishing the following key result (see Appendix~\ref{app:thm:testable_implication} for the proof).

\begin{theorem}\label{thm:testable_implication} 
    Under the functional class described in~\eqref{eq:dgp}, assumptions~\ref*{asmp:internal_validity}, ~\ref*{asmp:independence} \camera{and~\ref*{asmp:correct_representation}}, and with $e_s(X)$ and $h_{s}(X,A)$ being correctly specified models for $\E[A\mid X, S=s]$ and $\E[Y\mid X, A, S=s]$, we have that $H_0$ in~\eqref{eq:testable_independence} is true.
\end{theorem}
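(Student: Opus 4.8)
The plan is to show that, under the stated assumptions, each working-model parameter is a fixed, deterministic, invertible linear image of the corresponding causal-mechanism parameter, and then to appeal to the elementary fact that measurable functions of independent random variables are independent. Concretely, I would prove that $\omega_s$ is a function of $\alpha_s$ alone and that $\gamma_s$ is a function of $\beta_s$ alone, after which $P(\omega,\gamma)=P(\omega)P(\gamma)$ follows immediately from $P(\alpha,\beta)=P(\alpha)P(\beta)$ in Assumption~\ref*{asmp:independence}.

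First I would pin down the treatment-model parameter. Since $e_s$ is correctly specified for $\E[A\mid X,S=s]=\alpha_s^\top\psi(X)$ and, by Assumption~\ref*{asmp:correct_representation}, the observed representation differs from the true one only by an invertible linear map, the function identity $e_s(X)=\alpha_s^\top\psi(X)$ holds almost surely. Because the representation dimension is strictly below every environment's sample size, the relevant second-moment matrix is nondegenerate, so this identity determines the coefficient uniquely; rewriting $\alpha_s^\top\psi(X)$ in the observed basis yields $\omega_s=T_A\,\alpha_s$ for a fixed invertible matrix $T_A$ that depends only on the transform in Assumption~\ref*{asmp:correct_representation}. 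In particular $\omega_s$ is a measurable function of $\alpha_s$ with no dependence on $\beta_s$ or on $P(X\mid S=s)$.

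Next I would treat the outcome-model parameter, which is the crux. By consistency the observed outcome obeys $Y=\beta_s^\top\phi(X,A)+\varepsilon_Y$, and the noise condition $\E[\varepsilon_Y\mid X,A,S=s]=0$---which in this function class is precisely the content of unconfoundedness in Assumption~\ref*{asmp:internal_validity}---gives $\E[Y\mid X,A=a,S=s]=\beta_s^\top\phi(X,a)=\E[Y^a\mid X,S=s]$. Correct specification of $h_s$ and the invertible-representation assumption then yield, by the same nondegeneracy argument as above, $\gamma_s=T_Y\,\beta_s$ for a fixed invertible $T_Y$. The essential point is that unconfoundedness keeps the treatment mechanism out of this regression: it guarantees that conditioning on the endogenous treatment $A$ introduces no confounding term, so $\gamma_s$ depends on $\beta_s$ alone and not on $\alpha_s$.

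Finally, with $\omega_s=T_A\alpha_s$ a function of $\alpha_s$ and $\gamma_s=T_Y\beta_s$ a function of $\beta_s$, and with $\alpha_s\indep\beta_s$ from Assumption~\ref*{asmp:independence}, independence is preserved under these maps, so $\omega_s\indep\gamma_s$, which is exactly $H_0$ in~\eqref{eq:testable_independence}. I expect the main obstacle to be the outcome step: making rigorous that the correctly specified regression of $Y$ on the treatment-dependent features recovers $\beta_s$ up to the fixed linear transform. This is where consistency, the mean-zero noise condition, and unconfoundedness must be combined carefully, and it is precisely the step that would break---coupling $\gamma_s$ to $\alpha_s$ and hence to $\omega_s$---if unmeasured confounding were present.
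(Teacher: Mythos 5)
Your proposal is correct and takes essentially the same route as the paper's proof: both show that $\omega_s$ and $\gamma_s$ are fixed invertible linear images of $\alpha_s$ and $\beta_s$ respectively (combining consistency, unconfoundedness, correct specification of $e_s$ and $h_s$, and the invertible transforms $C,D$ from Assumption~\ref*{asmp:correct_representation}), and then conclude that independence is preserved under such deterministic maps. The only minor slip is attributing unique recovery of the coefficients to the sample-size condition $z, z' < \min_s n_s$, which concerns estimation feasibility; at the population level, uniqueness instead requires linear independence of the feature components in each environment, a point the paper's proof likewise leaves implicit.
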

The above theorem suggests that if we reject the null hypothesis $H_0$, it is likely because at least one of the conditions in the theorem is violated. While rejecting $H_0$ does not tell which condition in the theorem could be false, it still provides valuable information about the validity of the conditions in Assumption~\ref*{asmp:internal_validity} which are necessary for treatment effect estimation. Before introducing our algorithm to statistically test $H_0$, we first explore a setting where violating Assumption~\ref*{asmp:internal_validity} provably leads to the falsity of $H_0$.

\subsection{Unmeasured confounding leads to mechanism dependencies}
\label{sec:linear_case}
We examine a setting involving a linear causal model that includes both a main effect of treatment and interaction effects between treatment and covariates. While linearity may not always hold in real-world scenarios, this setting offers valuable insights into the conditions necessary to falsify causal assumptions in a multi-environment context. 

To understand what effect an unmeasured confounder has on the independence of mechanisms, we introduce another unmeasured covariate $U\in\mathbb{R}$ as follows,
\begin{equation} \label{eq:linear_example}
    \begin{aligned}
        A &= \alpha_s^\top \psi(X) + \alpha_{s}^{(U)} U + \varepsilon_A \\
        Y^a &= \beta_s^\top \phi(X,A=a) + \left(\beta_s^{(U)} + a\beta_s^{(AU)}\right) U + \varepsilon_Y
    \end{aligned}
\end{equation}
We let $\psi(X)=[1,X]^\top$ and $\phi(X,A) = [1,X,A,AX]^\top$ such that we can define the parameters $\alpha_s=[\alpha_s^{(0)},\alpha_s^{(X)}]$ and $\beta_s=[\beta_s^{(0)},\beta_s^{(X)}, \beta_s^{(A)}, \beta_s^{(AX)}]$. Throughout this example, we assume that $X\indep U \mid S$.

The above causal model is partially observed because $U$ is an unmeasured covariate. If $U$ is a common cause to both the treatment $A$ and potential outcome $Y^a$, that is $\{\alpha_{s}^{(U)}\neq 0, \beta_{s}^{(U)}\neq 0\}$ and/or $\{\alpha_{s} ^{(U)}\neq 0, \beta_{s}^{(AU)}\neq 0\}$, then we say that $U$ is an unmeasured confounder.

Whereas it is in general impossible to determine the presence of $U$, if we have correctly specified working models for $\E[A\mid X, S=s]$ and $\E[Y\mid A, X, S=s]$, we note that there exists dependencies between the observable parameters $\omega_s$ and $\gamma_s$ when $U$ is an unmeasured confounder (see Appendix~\ref{app:lem:linear_setting} for the proof).

\begin{lemma} \label{lem:linear_setting}
   Assume $U$ has a normal distribution with mean $\mu_s^{U}\in\mathbb{R}$ and standard deviation $\sigma_s^{(U)}\in\mathbb{R}^+$, and the noise variables $(\varepsilon_A, \varepsilon_Y)$ are normally distributed with mean zero and standard deviations $\sigma^{(A)}\in\mathbb{R}^+$ and $\sigma^{(Y)}\in\mathbb{R}^+$.  Consider the well-specified working models $e_s(X)=\omega_s^\top \widetilde\phi(X)$ and $h_s(X,A)=\gamma_{s}^\top \widetilde\psi(X, A)$ with $\widetilde\phi(X)=[1,X]^\top$ and $\widetilde\psi(X, A)=[1,X,A,AX,A^2]^\top$. Under the model in~\eqref{eq:linear_example} with $U$ being an unmeasured confounder, we then have that the observable parameters are $\omega_s=\alpha_s + [\alpha_{s}^{(U)} \mu_{s}^{(U)}, 0]^\top$ and $\gamma_s = [\beta_s, 0]^\top + \Gamma_s$ where
    \begin{align*}
    \small
        \Gamma_s = 
        \delta_s 
        \begin{bmatrix}
        - \beta_s^{(U)} \left( \frac{\alpha_s^{(0)}(\sigma_s^{(U)})^{2} }{\alpha_s^{(U)}} - \mu_s^{(U)}\left(\frac{\sigma_s^{(A)}}{\alpha_s^{(U)}}\right)^2 \right) \\
        -  \beta_s^{(U)} \frac{\alpha_s^{(X)}(\sigma_s^{(U)})^{2}}{\alpha_s^{(U)}}  \\
        \beta_s^{(U)} \frac{(\sigma_s^{(U)})^2}{\alpha_s^{(U)}}  - \beta_s^{(AU)} \left( \frac{\alpha_s^{(0)}(\sigma_s^{(U)})^{2}}{\alpha_s^{(U)}} - \mu_s^{(U)}\left(\frac{\sigma_s^{(A)}}{\alpha_s^{(U)}}\right)^2 \right) \\
        - \beta_s^{(AU)}\frac{\alpha_s^{(X)} (\sigma_s^{(U)})^{2}}{\alpha^{(U)}}  \\[0.1cm]
        \beta_s^{(AU)}  \frac{(\sigma_s^{(U)})^2}{\alpha_s^{(U)}} 
        \end{bmatrix}
    \end{align*}
    and $\delta_s =\left( (\sigma_s^{(U)})^2  +\left(\frac{\sigma_s^{(A)}}{\alpha_s^{(U)}}\right)^2\right)^{-1}$.
\end{lemma}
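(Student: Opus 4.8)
The plan is to derive the two working-model parameter vectors $\omega_s$ and $\gamma_s$ by computing the conditional expectations $\E[A \mid X, S=s]$ and $\E[Y \mid X, A, S=s]$ explicitly under the confounded model~\eqref{eq:linear_example}, and then reading off their coefficients against the bases $\widetilde\phi$ and $\widetilde\psi$. By consistency (Assumption~\ref*{asmp:internal_validity}), the observed outcome equals $Y = \beta_s^\top \phi(X,A) + (\beta_s^{(U)} + A\beta_s^{(AU)})U + \varepsilon_Y$, so both conditional expectations are pinned down once we know how $U$ behaves given the conditioning variables.

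First I would treat the treatment mechanism, which is the easy case. Since $X \indep U \mid S$ and $\E[\varepsilon_A \mid X, S=s]=0$, we have $\E[U \mid X, S=s] = \mu_s^{(U)}$, and therefore $\E[A \mid X, S=s] = (\alpha_s^{(0)} + \alpha_s^{(U)}\mu_s^{(U)}) + \alpha_s^{(X)} X$. Matching this against $\omega_s^\top \widetilde\phi(X)$ with $\widetilde\phi(X)=[1,X]^\top$ immediately yields $\omega_s = \alpha_s + [\alpha_s^{(U)}\mu_s^{(U)}, 0]^\top$.

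The substantive part is the outcome mechanism, where the obstacle is that conditioning on $A$ breaks the conditional independence of $U$ and the observables: because $A$ is a child of $U$, the regressor $\E[U \mid X, A, S=s]$ is no longer constant but depends on both $A$ and $X$, and this is precisely the channel through which confounding leaks into $\gamma_s$. I would exploit that, given $X$ and $S=s$, the pair $(U,A)$ is jointly Gaussian (a linear map of the independent normals $U$ and $\varepsilon_A$), so the Gaussian conditioning formula gives $\E[U \mid X, A, S=s] = \mu_s^{(U)} + \frac{\cov(U, A \mid X, S)}{\V(A \mid X, S)}\bigl(A - \E[A \mid X, S]\bigr)$. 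Substituting $\cov(U, A \mid X, S) = \alpha_s^{(U)}(\sigma_s^{(U)})^2$ and $\V(A \mid X, S) = (\alpha_s^{(U)})^2(\sigma_s^{(U)})^2 + (\sigma_s^{(A)})^2 = (\alpha_s^{(U)})^2/\delta_s$ turns the regression slope into $(\sigma_s^{(U)})^2\delta_s/\alpha_s^{(U)}$, so that $\E[U \mid X, A, S=s]$ is an explicit affine function of $(X,A)$.

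Finally I would plug this affine expression into $(\beta_s^{(U)} + A\beta_s^{(AU)})\E[U \mid X, A, S=s]$ and expand. The product of the $A\beta_s^{(AU)}$ factor with the $A$-linear part of $\E[U \mid X, A, S=s]$ generates a genuine $A^2$ term, which is exactly why the working basis $\widetilde\psi(X,A)=[1,X,A,AX,A^2]^\top$ must include $A^2$ for the model to remain well-specified. Collecting the coefficients of $1, X, A, AX, A^2$ and subtracting off $[\beta_s, 0]^\top$ leaves the correction vector $\Gamma_s$. I expect the bookkeeping to be routine for the $X$, $AX$, and $A^2$ entries, while the constant and $A$ entries require the algebraic identity $1 - (\sigma_s^{(U)})^2\delta_s = \delta_s(\sigma_s^{(A)}/\alpha_s^{(U)})^2$ to rewrite the constant term of $\E[U \mid X, A, S=s]$ into the form $-\delta_s\bigl(\alpha_s^{(0)}(\sigma_s^{(U)})^2/\alpha_s^{(U)} - \mu_s^{(U)}(\sigma_s^{(A)}/\alpha_s^{(U)})^2\bigr)$ appearing in $\Gamma_s$; verifying this one simplification is the main place where care is needed.
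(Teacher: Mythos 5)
Your proposal is correct and reaches exactly the stated $\omega_s$ and $\Gamma_s$: the treatment-side computation is identical to the paper's, and your outcome-side bookkeeping checks out, including the identity $1 - (\sigma_s^{(U)})^2\delta_s = \delta_s(\sigma_s^{(A)}/\alpha_s^{(U)})^2$, which is indeed precisely what is needed to put the constant and $A$ entries of $\Gamma_s$ into the stated form. Where you genuinely differ from the paper is in how the key object $\E[U \mid X, A, S=s]$ is computed. The paper proceeds via Bayes' rule on densities: it writes $f(U \mid X, A, S) \propto f(A \mid X, U, S)\, f(U \mid S)$, recognizes each factor as proportional to a Gaussian density in $u$, and invokes a separate auxiliary result (Lemma~\ref{lem:gaussian_products}, proved by completing the square) on products of Gaussian densities to read off the posterior mean as a precision-weighted average. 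You instead observe that $(U, A)$ is jointly Gaussian given $(X, S=s)$ --- being a linear image of the independent normals $(U, \varepsilon_A)$ --- and apply the bivariate-normal regression formula $\E[U \mid X, A, S=s] = \mu_s^{(U)} + \frac{\cov(U,A\mid X,S)}{\V(A \mid X, S)}\bigl(A - \E[A \mid X, S]\bigr)$; your slope $(\sigma_s^{(U)})^2\delta_s/\alpha_s^{(U)}$ and the resulting affine expression agree with the paper's formula after the identity above. Your route is more elementary and self-contained in that it dispenses with the density manipulation and the auxiliary product-of-Gaussians lemma entirely, at the mild cost of having to justify joint Gaussianity, i.e., that $\varepsilon_A$ is independent of $U$ given $(X, S=s)$ --- an assumption the paper's Bayes-rule step also uses implicitly when it writes $f(A \mid X, U, S)$ as the density of $\varepsilon_A$ evaluated at $a - \alpha_s^{(0)} - \alpha_s^{(X)}x - \alpha_s^{(U)}u$. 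Both arguments otherwise share the same skeleton (compute the two conditional expectations, collect coefficients against $[1,X]$ and $[1,X,A,AX,A^2]$), and your observation that the $A^2$ basis element is forced by the cross term $A\beta_s^{(AU)} \cdot c_A A$ matches the paper's construction.
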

The lemma, which holds for any $P(X\mid S=s)$, states that if $U$ is an unmeasured confounder then the observable parameters  $(\gamma_s,\omega_s)$ have shared dependencies on the true parameters of the underlying data-generating process: the parameters $(\alpha_{s}^{(0)}, \alpha_{s}^{(X)}, \alpha_{s}^{(U)}$, $\mu_{s}^{(U)})$ appear in both the expressions of $\omega_s$ and $\gamma_s$.

The above results hold for both single-environment data ($K=1$) and multi-environment data ($K>1$), and does not rely on Assumption~\ref*{asmp:independence}. Next, we show that our proposed falsification strategy allows us to detect the presence of the unmeasured confounder $U$ under certain conditions on the multi-environment structure when invoking Assumption~\ref*{asmp:independence} (see Appendix~\ref{app:thm:linear_falsification_strategy} for the proof).
\begin{theorem}\label{thm:linear_falsification_strategy}
    Under the assumptions stated in Lemma~\ref{lem:linear_setting} and Assumption~\ref*{asmp:independence}, if at least one of the following parameters $(\alpha_{s}^{(0)} \alpha_{s}^{(X)}, \alpha_{s}^{(U)},\mu_{s}^{(U)})$ are i.i.d. sampled from a non-degenerate distribution for $s=1,\dots,K$, then $H_0$ is false if and only if $U$ is a confounder for all $s\in\{1,\dots,K\}$.
\end{theorem}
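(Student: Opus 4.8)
The plan is to reduce everything to the joint law of a single pair $(\omega_s,\gamma_s)$, since the parameters are i.i.d. across environments; I drop the subscript $s$. The strategy is deliberately asymmetric: to show $H_0$ is \emph{false} it suffices to exhibit one witness of dependence between $\omega$ and $\gamma$, whereas to show $H_0$ is \emph{true} I must prove full factorisation, which I obtain from the ICM factorisation $P(\alpha,\beta)=P(\alpha)P(\beta)$ of Assumption~\ref{asmp:independence} (read as making the treatment mechanism, the outcome mechanism, and the confounder's own distribution $P(U\mid S)$ three independent modules). The bookkeeping device is Lemma~\ref{lem:linear_setting}: $\omega$ is a function only of the treatment/confounder parameters $(\alpha^{(0)},\alpha^{(X)},\alpha^{(U)},\mu^{(U)})$, while $\gamma=[\beta,0]^\top+\Gamma$, where the bias $\Gamma$ couples the outcome parameters $(\beta^{(U)},\beta^{(AU)})$ to those \emph{same} treatment/confounder parameters. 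All dependence between $\omega$ and $\gamma$ must flow through this shared set, and the two directions amount to deciding when that channel is active.

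For the reverse direction I would prove the contrapositive: if $U$ fails to confound then $H_0$ holds. Because the zero/nonzero coefficient pattern is fixed by the shared DAG, non-confounding means either $\beta^{(U)}=\beta^{(AU)}=0$ or $\alpha^{(U)}=0$. In the first case every entry of $\Gamma$ vanishes, so $\gamma=[\beta,0]^\top$ depends on $\beta$ alone while $\omega$ depends on $(\alpha,\mu^{(U)})$. In the second case the Lemma~\ref{lem:linear_setting} formula no longer applies (it divides by $\alpha^{(U)}$), so I would re-derive the regression limit directly, using that $U\indep(X,A)$ when $\alpha^{(U)}=0$, obtaining $\gamma=[\beta^{(0)}+\beta^{(U)}\mu^{(U)},\,\beta^{(X)},\,\beta^{(A)}+\beta^{(AU)}\mu^{(U)},\,\beta^{(AX)},\,0]^\top$ and $\omega=[\alpha^{(0)},\alpha^{(X)}]^\top$. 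In either case $\omega$ and $\gamma$ are functions of disjoint parameter blocks that the module independence renders independent (treatment mechanism versus outcome-and-confounder-distribution), so $P(\omega,\gamma)=P(\omega)P(\gamma)$ and $H_0$ is true.

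For the forward direction I would assume $U$ confounds, i.e. $\alpha^{(U)}\neq0$ together with $\beta^{(U)}\neq0$ or $\beta^{(AU)}\neq0$, and that one of the four listed parameters is non-degenerate. Those four are exactly the parameters entering $\omega$, so non-degeneracy makes $\omega$ itself vary. I would then case-split on the active parameter and, in each case, point to an entry of $\Gamma$ whose coefficient is guaranteed nonzero by the confounding hypothesis, so the matching entry of $\gamma$ becomes a genuine non-constant function of the same parameter. The witness is a bivariate marginal $(\omega^{(i)},\gamma^{(j)})$: when the shared multiplier $c=\delta\,\beta^{(U)}(\sigma^{(U)})^2/\alpha^{(U)}$ (or its $\beta^{(AU)}$-analogue) is a nonzero constant, $\cov(\omega^{(i)},\gamma^{(j)})\neq0$ directly; when it is itself random, first moments may cancel, so I would instead use $\V(\gamma^{(j)}\mid\omega^{(i)})$, which varies with $\omega^{(i)}$ whenever that multiplier has positive variance. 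Either moment certifies dependence of $\omega$ and $\gamma$, so $H_0$ is false.

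The main obstacle is the non-vanishing requirement in the forward direction: I must track the rational-function coefficients in $\Gamma$ and use the confounding conditions to preclude cancellation in at least one coordinate for \emph{every} choice of the non-degenerate parameter. The $\alpha^{(U)}$ case is the delicate one, since $\alpha^{(U)}$ enters $\omega$ only through the product $\alpha^{(U)}\mu^{(U)}$; if $\mu^{(U)}\equiv0$ and $\alpha^{(U)}$ is the sole non-degenerate parameter, $\omega$ collapses to a constant and no witness can exist. I would therefore read the non-degeneracy hypothesis as the statement that $\omega$ genuinely varies across environments, which is what the four-parameter phrasing is meant to encode and which excludes this measure-zero configuration. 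A secondary subtlety is that Assumption~\ref{asmp:independence} factorises $\alpha$ from $\beta$ but not necessarily within $\alpha$, so the covariance/variance computation must be arranged to rely only on the cross-block independence (such as $\beta^{(U)}\indep\alpha^{(X)}$ and $\sigma^{(U)}\indep\alpha^{(X)}$) that actually drives it.
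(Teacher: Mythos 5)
Your proposal follows essentially the same route as the paper: the paper's proof also splits into the two non-confounding cases ($\alpha^{(U)}=0$, or $\beta^{(U)}=\beta^{(AU)}=0$), re-derives the regression limits there (its auxiliary Lemma~\ref{lem:special_cases}, whose formulas match yours exactly, including the $\beta^{(U)}\mu^{(U)}$ and $\beta^{(AU)}\mu^{(U)}$ shifts in the $\alpha^{(U)}=0$ case), and then invokes Lemma~\ref{lem:linear_setting} in the confounded case to argue that $\omega_s$ and $\gamma_s$ share dependence on $(\alpha_s^{(0)},\alpha_s^{(X)},\alpha_s^{(U)},\mu_s^{(U)})$. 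Where you differ is in rigor, and the differences are substantive. First, the paper's forward direction stops at ``both depend on the same parameters, hence if one varies, $\omega_s\not\indep\gamma_s$''; shared functional dependence on a random quantity does not by itself imply statistical dependence, and your explicit moment witnesses ($\cov(\omega^{(i)},\gamma^{(j)})$ when the shared multiplier is a nonzero constant, $\V(\gamma^{(j)}\mid\omega^{(i)})$ when it is random) supply the missing step. Second, the edge case you isolate is real: if $\mu^{(U)}\equiv 0$ and $\alpha^{(U)}$ is the sole non-degenerate parameter, then $\omega_s=[\alpha_s^{(0)},\alpha_s^{(X)}]^\top$ is constant while $U$ still confounds, so $H_0$ holds and the ``only if'' direction of Theorem~\ref{thm:linear_falsification_strategy} fails as literally stated -- the paper's proof silently passes over this, and your reading of the hypothesis as ``$\omega_s$ genuinely varies'' is the correction the statement needs. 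Third, you are right that the backward direction needs independence beyond the literal $P(\alpha,\beta)=P(\alpha)P(\beta)$ of Assumption~\ref{asmp:independence}: in the non-confounded cases, $\gamma_s$ (resp.\ $\omega_s$) involves $\mu_s^{(U)}$, so one must treat the confounder's distribution parameters as a module independent of the opposite block; the paper's ``no shared parameters, hence independent'' conclusion tacitly assumes this. In short, your proof is the paper's proof done carefully, and the care exposes two gaps (the dependence-witness step and the $\mu^{(U)}\equiv 0$ configuration) that the published argument does not address.
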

The theorem establishes that $H_0$ can be false due to violations of Assumption~\ref*{asmp:internal_validity}, which can be understood in terms of following statement: unmeasured confounding can create dependencies between observable parameters. The reason at least one of the parameters $(\alpha_{s}^{(0)} \alpha_{s}^{(X)}, \alpha_{s}^{(U)},\mu_{s}^{(U)})$ must be sampled from a non-degenerate distribution is that this creates a statistical dependence between $\omega_s$ and $\gamma_s$ in the presence of an unmeasured confounder. Detecting this dependence becomes crucial for falsifying unconfoundedness.

The parameters $(\alpha_{s}^{(0)}, \alpha_{s}^{(X)}\alpha_{s}^{(U)},\mu_{s}^{(U)})$ are related to the distributions $P(A\mid X,S=s)$ and $P(U\mid S=s)$. Thus, the non-degeneracy condition implies that falsifying Assumption~\ref*{asmp:internal_validity} requires changes in either the treatment assignment or the distribution of the unmeasured confounder across environments. This observation motivates the requirement of having multi-environment data: that is, without multiple environments  there are no distributional changes that enable falsification to happen.

\camera{The same non-degeneracy condition was observed by \citet{karlsson2023detecting} despite using a different formalism based on causal graphs. Their approach relies on identifying a specific d-separation via a conditional independence test between the treatment and outcome variables, which also allows for falsification of unconfoundedness across multiple environments. Their test can be interpreted as an indirect test of independence of causal mechanisms, as it operates solely on observed variable relationships. In contrast, our approach directly tests independence at the level of mechanism parameters. As a consequence, a further key difference is that their approach needs to make additional structural assumptions on the covariate distribution $P(X \mid S)$, while our theoretical results impose no such constraint.}

\section{Algorithm}

\camera{We now introduce the Mechanism INdependent Test (MINT) algorithm, which operationalizes our falsification strategy for testing mechanism independence using data from multiple environments.} We will use the following notation: for all environments $s=1,\dots,K$, we denote the observed data matrices as $\mathbf{A}_s = [A_1,\dots,A_{n_s}]^\top$, $\mathbf{Y}_s = [Y_1,\dots,Y_{n_s}]^\top$, $
\widetilde\Psi_s=[\widetilde\psi(X_1),\dots, \widetilde\psi(X_{n_s})]^\top$, and $\widetilde\Phi_s = [\widetilde\phi(X_1,A_1),\dots, \widetilde\phi(X_{n_s}, A_{n_s})]^\top$. 

The MINT algorithm can be divided into two steps: In the first stage, for all $s=1,\dots,K$, we estimate the parameters $(\omega_s,\gamma_s)$. The estimates are obtained through solving the least-squares problems $\widehat{\omega}_s = \argmin_{\omega} || \mathbf{A}_s - \widetilde\Psi_s \omega||_2^2$ and $\widehat{\gamma}_s = \argmin_{\gamma}|| \mathbf{Y}_s -  \widetilde\Phi_s\gamma||_2^2$ where $||\cdot||_2^2$ denotes the $l^2$-norm. We denote all estimated parameters as $\widehat{\boldsymbol{\omega}}=[\widehat{\omega}_1,\dots, \widehat{\omega}_K]$ and $\widehat{\boldsymbol{\gamma}}=[\widehat{\gamma}_1,\dots, \widehat{\gamma}_K]$. In the second stage, we perform a statistical independence test for the null hypothesis $H_0 : P(\omega, \gamma) =P(\omega)P(\gamma)$ using the estimated parameters $\widehat{\boldsymbol{\omega}}$ and $\widehat{\boldsymbol{\gamma}}$. If we accept $H_0$ then we should consider Assumption~\ref*{asmp:internal_validity} and~\ref*{asmp:independence} to hold. On the other hand, if we reject $H_0$ then both assumptions are falsified jointly.

For the statistical independence test in the second stage, we study the co-variability of $(\gamma_s, \omega_s)$ across all environments by analyzing the covariance matrix $\Sigma=\cov(\omega, \gamma)$. We propose using the test statistic $T=\sqrt{\sum_{i,j}|\Sigma_{ij}|^2}$ which is the Frobenius norm of the covariance matrix; crucially, this test statistic is always non-negative and $T=0$ under $H_0$. The estimated test statistic becomes 
$$\widehat{T}(\widehat{\boldsymbol{\omega}},\widehat{\boldsymbol{\gamma}})=\frac{1}{K}\sqrt{\sum_{i=1}^{z}\sum_{j=1}^{z'} \left[\sum_{s=1}^K (\widehat{\omega}_{s,i}-\bar{\omega}_{i})(\widehat{\gamma}_{s,j}-\bar{\gamma}_{j})\right]^2}$$
where $\bar{\omega}_{i}=K^{-1}\sum_{s=1}^{K}\widehat{\omega}_{s,i}$ and $\bar{\gamma}_{j}=K^{-1}\sum_{s=1}^{K}\widehat{\gamma}_{s,j}$.

Lastly, we need to calibrate a rejection threshold $R$ such that we reject $H_0$ if $\widehat{T}(\widehat{\boldsymbol{\omega}},\widehat{\boldsymbol{\gamma}}) > R$ while ensuring guarantees on the Type I error $\Pr(\widehat{T}(\widehat{\omega}, \widehat{\gamma}) > R \mid H_0) \leq \alpha$ for some $\alpha\in(0,1)$. While this can be done using a permutation-based procedure with $M$ resamples,  we have to take into account the uncertainty of the estimates from the model fitting in the first step of our~algorithm. 

To address this problem, we introduce an additional modification in the permutation-based calibration. Specifically, in the first step, we use bootstrapping and resample $M$ datasets with replacement to obtain estimates $\{(\widehat{\omega}^{(m)}, \widehat{\gamma}^{(m)})\}_{m=1}^M$. Then, for each $m=1,\dots, M$, we compute $T_m= T(\widetilde{\omega}^{(m)}, \widehat{\gamma}^{(m)})$ where $\widetilde{\omega}^{(m)}$ is a random permutation of $\widehat{\omega}^{(m)}$. Finally, we determine the rejection threshold as 
$$R=\argmax_{t\in (0,\infty)}\{ t : M^{-1}\sum_{m=1}^M 1(T_m > t) \leq \alpha\}~,$$
where $1(T_m > t)$ equals 1 if $T_m>t$ and otherwise 0. Throughout the remainder of the paper, we let $M=1000$. To highlight the importance of bootstrapping in the calibration, we present an ablation study in Appendix~\ref{app:additional_experiments} where we show that bootstrapping is essential for ensuring Type 1 errors remain below $\alpha$.

\section{Experiments} 

\begin{figure*}[ht]
    \centering
    \includegraphics[width=0.95\textwidth]{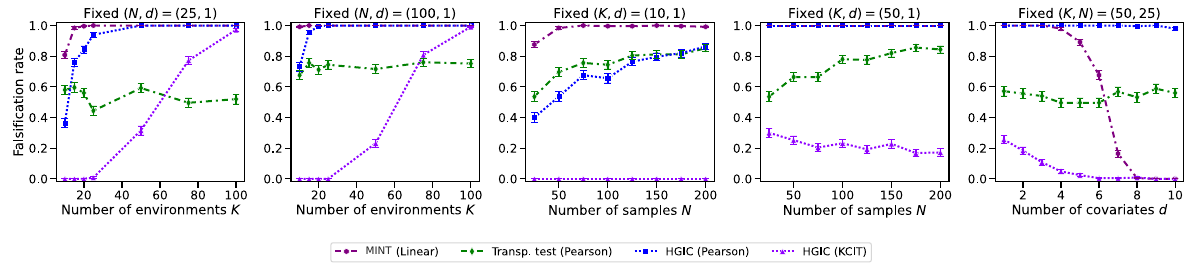}
    \caption{Comparison of falsification rate  when varying either the number of environment $K$, the number of samples per environment $N$, or the number of observed covariates $d$. The error bars show the standard error over 250 repetitions.}
    \label{fig:combined_study}
\end{figure*}
\begin{figure*}[ht]
    \centering
    \begin{subfigure}{0.6\textwidth}
         \centering
         \includegraphics[width=\linewidth]{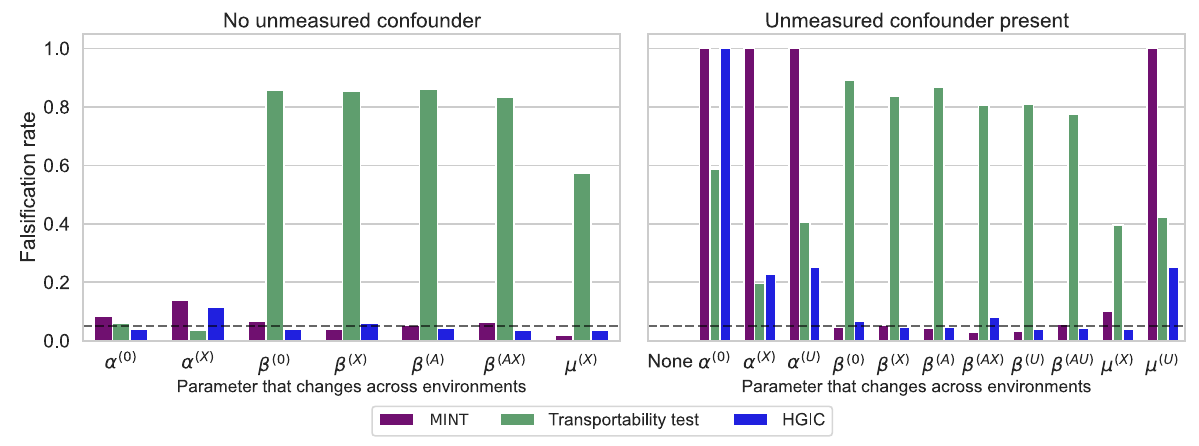}
          \caption{}
          \label{fig:vary_mechanism}
     \end{subfigure}
     ~
     \begin{subfigure}{0.26\textwidth}
         \centering
          \includegraphics[width=\linewidth]{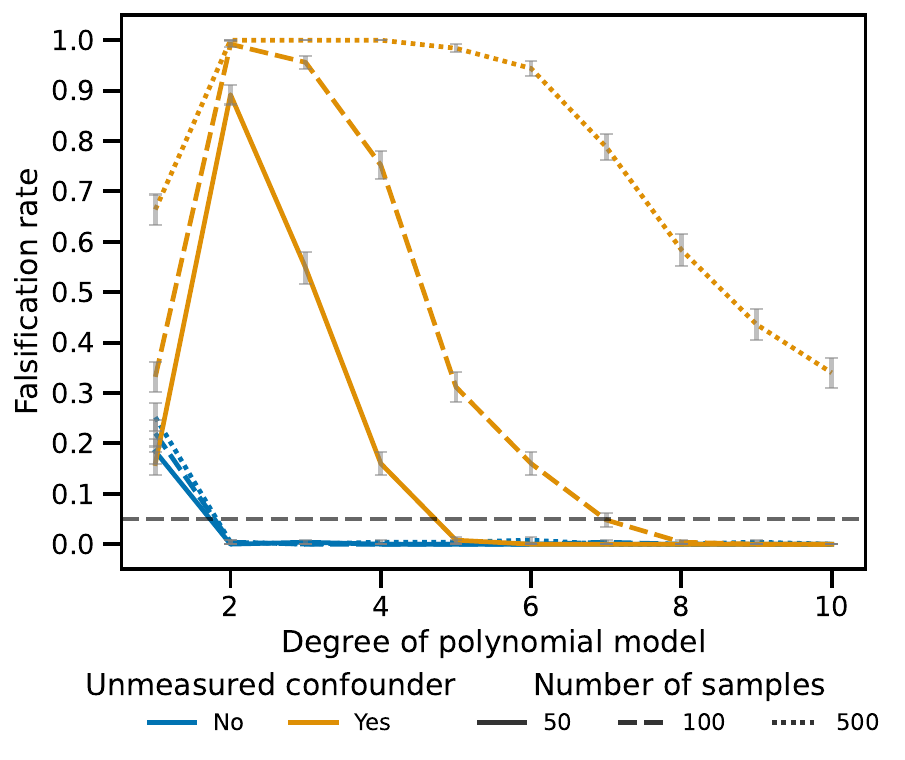}
          \caption{}
        \label{fig:detection_vs_complexity}
     \end{subfigure}
    \caption{\textbf{(a):} Comparison of falsification rate when different mechanisms vary across the environment. The parameters on the x-axis correspond to those of the data-generating process in~\eqref{eq:linear_example}. \textbf{(b):} Our algorithm's performance is evaluated using polynomial basis functions as feature representation. The falsification rate is plotted against polynomial degree, with the true data-generating process including polynomials up to degree 2. The black dotted line in both figures correspond to the chosen significance level $\alpha=0.05$. Average falsification rate and standard standard errors are reported over 250 repetitions. \camera{In the absence of unmeasured confounding, the falsification rate reflects the Type I error rate and should remain below the significance level $\alpha = 0.05$. Conversely, in the presence of unmeasured confounding, the falsification rate corresponds to the power of the test, and thus, a higher rate is desirable.}}
    \label{fig:vary_mechanism-detection_vs_complexituy}
\end{figure*}

We conducted a series of experiments to compare the proposed MINT algorithm with alternative baseline approaches. First, we investigate efficiency with respect to number of samples and number of environments. Next, we validate our theoretical findings by investigating necessary mechanism changes that allow for falsification. We then assessed the sensitivity of our algorithm to (mis)specification in its working models. Lastly, we evaluated all methods under more realistic conditions using semi-synthetic data based on the real-world Twins dataset~\citep{almond2005costs}, which includes birth data across different geographical locations used as environment labels.

We measured performance using the falsification rate (probability of falsification) and set the significance level $\alpha=0.05$ to control Type 1 errors. \camera{In the absence of unmeasured confounding, the falsification rate reflects the Type I error rate and should remain below the significance level $\alpha = 0.05$. Conversely, in the presence of unmeasured confounding, the falsification rate corresponds to the statistical power of the algorithms, and thus, a higher rate is desirable. The code for reproducing our experiments is available at our GitHub~repository.\footnote{\url{https://github.com/RickardKarl/falsification-unconfoundedness}}}

\subsection{Baselines}

We compare the proposed MINT algorithm to two baselines. The first, referred to as the transportability test, evaluates whether the independence $Y\indep S \mid A, X$ holds, allowing for the joint falsification of Assumption~\ref*{asmp:internal_validity} and the transportability condition $Y^a\indep S \mid X$~\citep{dahabreh2020benchmarking}. A detailed overview of transportability-based falsification strategies is provided in Appendix\ref{app:transportability}. The second baseline is the hierarchical graph independence constraint (HGIC) approach~\citep{karlsson2023detecting}. This approach tests a conditional independence statement based on a hierarchical description of the data. Unlike the transportability test, HGIC remains valid even when transportability conditions are violated, as it relies on an independence of causal mechanisms assumption similar to ours. This makes HGIC a strong candidate for comparison. 

Since both baselines require selecting a conditional independence testing method, we evaluated them using either the Pearson partial correlation test, which is suitable for linear data, or the non-parametric kernel conditional independence test (KCIT)~\citep{zhang2011kernel} with a radial basis function kernel, which is better suited for nonlinear data. For HGIC, we encountered some issues with KCIT that required minor modifications to the original implementation used by~\citet{karlsson2023detecting}. These issues and the differences between our implementation and the original are discussed in detail in Appendix~\ref{app:hgic_implementation}.

\subsection{Synthetic data}

\subsubsection{Which method is most efficient?} \label{sec:efficiency_experiment}
In the first experiment, we aimed to evaluate the efficiency of each method in a well-specified linear setting (see Appendix~\ref{app:synthetic_data} for more details on data generation). To make our method well-specified to the underlying data generating process, we used linear feature representations for MINT, and for the two baselines methods we used the partial Pearson correlation test which is suitable for conditional independence testing with linear data. Additionally, following~\citet{karlsson2023detecting}, we tested HGIC using KCIT, as it is also well-specified in this context.

We evaluated the falsification rate of each method under an unmeasured confounder while varying the number of environments $K$, the number of samples per environment $N$, or the number of observed covariates $d$, keeping the other factors fixed. The results are shown in Figure~\ref{fig:combined_study}.  When varying the number of environments $K$, our proposal MINT consistently outperformed HGIC. The transportability test was most effective when $K$ was small, though both MINT and HGIC showed higher falsification rates as $K$ increased. HGIC performed better with the Pearson test than with KCIT, highlighting the advantage of a parametric test in a well-specified setting.  Increasing the number of samples $N$ improved falsification rates for all methods except HGIC with KCIT, although the gains were slower compared to increasing $K$. Finally, when varying $d$, HGIC with KCIT lost power the fastest, followed by MINT, while Pearson-based methods remained robust up to $d=10$.

Additionally, in Appendix~\ref{app:additional_experiments}, we confirm that all methods controlled Type 1 error in the absence of an unmeasured confounder, with the falsification rate remaining below the significance level~$\alpha=0.05$.

\subsubsection{What are necessary mechanism changes to detect confounding?}

In the second experiment, we validated the theory behind our proposed MINT algorithm by generating various types of independent mechanism changes across the environments, following the model in~\eqref{eq:linear_example} (see Appendix~\ref{app:linear_example_data} for details on data generation). We also applied the baseline methods to the same data to provide further insights into the necessary conditions for them to serve as a valid falsification strategy.

\camera{The different parameters on the x-axis in Figure~\ref{fig:vary_mechanism} represent which of the parameters in~\eqref{eq:linear_example} are varied across different environments, while all other parameters are kept fixed. This is done under both the absence and presence of unmeasured confounding.} We observed that environmental changes in the parameters $(\alpha_{s}^{(0)}, \alpha_{s}^{(X)}, \alpha_{s}^{(U)}, \mu_{s}^{(U)})$, which influence either the treatment mechanism $\mathbb{E}[T\mid X,S=s]$ or the distribution of the unmeasured confounder $P(U\mid S=s)$, were sufficient for MINT to falsify unconfoundedness. This observation supports the claim we proved in Theorem~\ref{thm:linear_falsification_strategy}.  

Furthermore, both HGIC and the transportability test falsified under the same conditions when unmeasured confounders were present. However, the transportability test showed a notable issue with false positives in the absence of unmeasured confounding. The most likely explanation is that these false positives result from mechanism changes that violate the transportability condition, a key assumption for applying the transportability test.

\subsubsection{Model Specification \& Performance} \label{sec:model_specification_experiment}
In the third experiment, we sampled data from a process with a polynomial basis function (see Appendix~\ref{app:synthetic_data} for more details on the data-generating process).  We examined how changing the specification of the working models $e_s(X)$ and $h_s(X)$ in MINT affected its performance. The true polynomials in the data-generating process had a degree of 2, while MINT used a representation with polynomial basis functions with degrees ranging from 1 to 10. If the degree was set to 1, this introduced misspecification. For degrees of 2 or higher, the model was well-specified but became increasingly flexible as the polynomial degree~increased.

As shown in Figure~\ref{fig:detection_vs_complexity}, misspecified models led to an elevated false positive rate in the absence of unmeasured confounding. However, once the models were well-specified, false positives dropped below the nominal level $\alpha=0.05$ even as model flexibility increased. When an unmeasured confounder was present, MINT successfully detected it, though its power (i.e., true positive rate) declined with higher model flexibility. We observed, however, that this reduction in power could be mitigated by increasing the number of samples per environment.

We also compared the transportability test and HGIC under both well-specified and misspecified settings. Using the Pearson partial correlation test on nonlinear data allowed us to assess their performance under misspecification. Similar to MINT, both exhibited higher false positive rates in the absence of unmeasured confounding when misspecified. Full results are provided in Table~\ref{tab:simulation_study_B1} in Appendix~\ref{app:additional_experiments}.

\subsection{Twins data}

\begin{figure}[t]
     \centering
     \includegraphics[width=\linewidth]{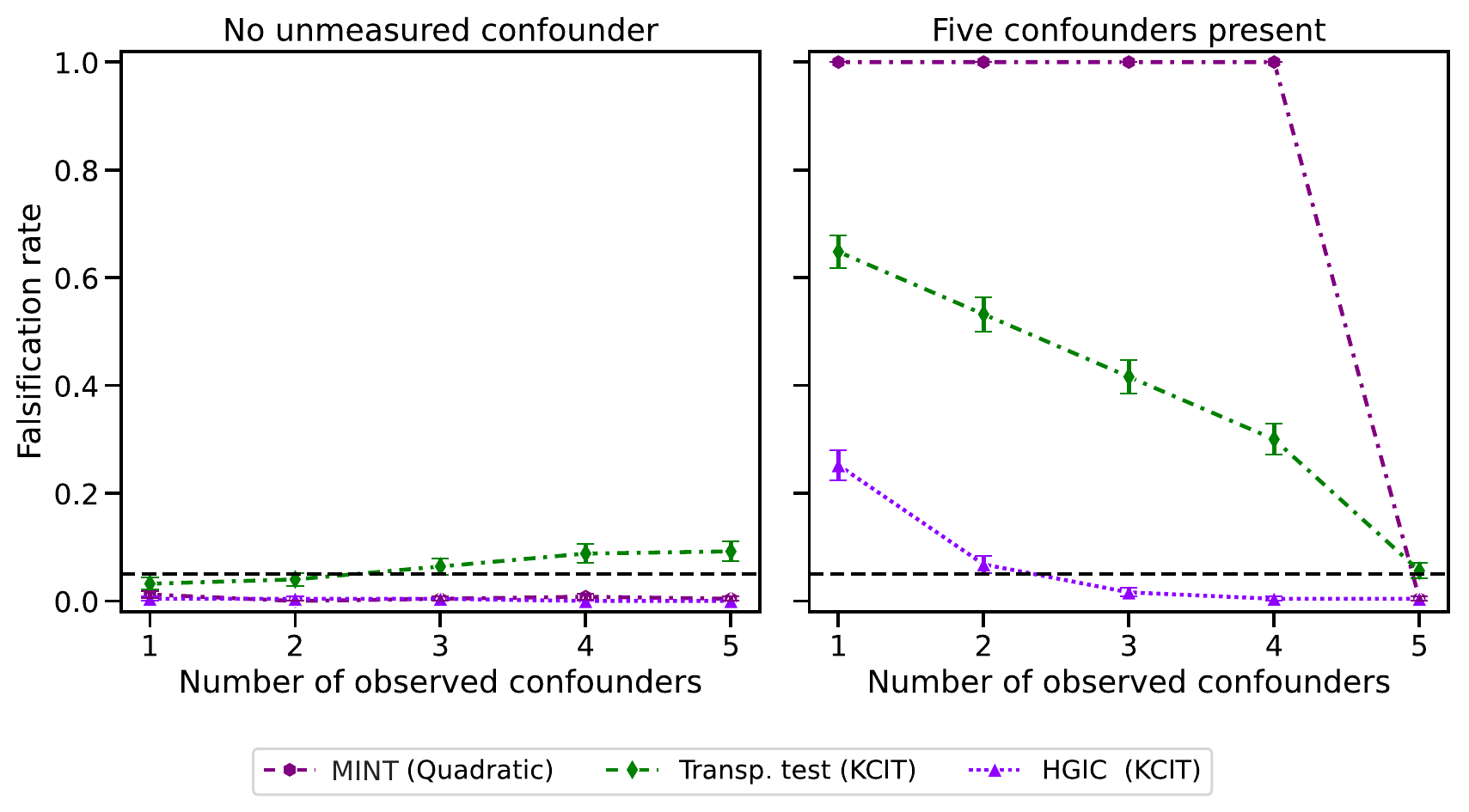}
     \caption{Comparison on the Twins semi-synthetic dataset. Average falsification rate and standard errors are reported over 250 repetitions; the black dotted line correspond to the chosen significance level $\alpha=0.05$.}
     \label{fig:twins_experiment}
\end{figure}

In the final experiment, we used data from twin births in the USA between 1989-1991~\citep{almond2005costs} to construct a multi-environment observational dataset with a known causal structure. The environment corresponds to the birth state of each pair of twins. We generated treatment and outcome variables using the covariates from this dataset, providing a ground-truth causal structure to validate our methods while emulating realistic distributions with real-world covariates (see Appendix~\ref{app:twins_data} for details on dataset construction). The outcomes and treatment were modeled using a quadratic polynomial, and all methods were well-specified through either a quadratic polynomial feature representation or KCIT for conditional independence testing.

We examined a scenario with five confounders, varying the number of observed covariates from one to five. When all five confounders were observed, no unmeasured confounders remained; otherwise, some confounders were unmeasured. As a control, we repeated the experiment while varying the number of observed confounders but ensuring no unmeasured confounders. The results, shown in Figure~\ref{fig:twins_experiment}, indicate that MINT outperforms both the transportability test and HGIC in terms of power. Additionally, when all five confounders were observed, MINT achieved the nominal falsification error below
$\alpha=0.05$.

\section{Discussion}

Our falsification strategy is not a silver bullet to detect unmeasured confounding. As we have demonstrated, our proposed algorithm is a joint falsification test that assesses both the conditions necessary for causal identification and the assumption of independent causal mechanisms. Thus, the limit to how informative this falsification test can be will depend on the plausibility of the independent causal mechanism assumption. 

One reason our proposed algorithm performs well, especially compared to HGIC with KCIT, could be because of the parametric nature of our approach. While parametric assumptions can be incorporated into both HGIC and the transportability test by selecting an appropriate conditional independence test, such as the Pearson test used in our experiment, our algorithm encodes these assumptions differently. Specifically, it explicitly incorporates the parametric assumptions for both the treatment and outcome models.  Interestingly, this approach aligns more closely with the common practice of specifying both models when estimating treatment effects in observational studies.

A drawback of relying on parametric assumptions for the treatment and outcome models is the increased Type 1 error under misspecification. This happened to our algorithm when $\{\tilde\psi, \tilde\phi\}$ were misspecified. So far, we have assumed these representations are known a priori. \camera{To mitigate the risk of misspecification, one strategy is to construct feature representations that apply a broad set of transformations to the observed covariates. This ensures the representation is sufficiently expressive to capture the underlying relationships in the data. However, increasing the richness of the feature representation introduces a trade-off: while it reduces the risk of misspecification, it can also decrease statistical power due to increased model complexity. This trade-off was evident in our experiments (Figure~\ref{fig:vary_mechanism-detection_vs_complexituy}), where increasing model complexity lead to a reduction in power of the test.} 

Hence, a key future direction is to address the case where $\{\tilde\psi, \tilde\phi\}$ are unknown and attempt to learn them from data. In this case, we have shown that it would be sufficient to learn them up to some permutation and element-wise scaling. Alternatively, we could attempt to use more flexible (implicit) feature representations through the use of kernel methods~\citep{scholkopf2002learning}. Whereas further work is needed to adapt our theory to a kernelized algorithm, we provide a sketch as a starting point for such an approach in Appendix~\ref{app:kernel_alg}.

\section{Conclusion}

We propose novel algorithmic ideas to directly exploit observed dependencies in causal mechanisms for falsification of the assumptions necessary for causal effect identification. Specifically, we propose a two-stage algorithm that can be applied to multi-environment data. Although there are no universal solutions for addressing untestable assumptions in causal inference, we believe that our proposal has an important place in evaluating the necessary conditions to enable more trustworthy causal conclusions.

\section*{Impact Statement}
Causal inference plays a crucial role in real-world decision-making, underpinning fields from medicine to public policy. While our work aims to enhance the reliability and safety of causal inference methods, it remains an early-stage development. We stress the importance of careful implementation in collaboration with domain experts, particularly in high-stakes settings.

\section*{Acknowledgements}
Research reported in this work was was facilitated by the computational resources and support of the Delft AI Cluster (DAIC) at TU Delft.
We also thank our anonymous reviewers for their helpful comments and input.

\bibliographystyle{icml2025}
\bibliography{references}

\newpage
\appendix
\onecolumn

\section{Falsification with transportability conditions} \label{app:transportability}
A common way of using data from multiple environments to falsify the validity of Assumption~\ref*{asmp:internal_validity} is to assume a transportability condition that relates the different environments to each other. One of the most common way of formalizing the transportability condition is as follows.
\begin{assumption}[Conditional exchangeability between environments] \label{asmp:transportability} We assume for all $a\in\mathcal{A}$, $Y^a \indep S \mid X$.
\end{assumption}
Other variations of the transportability condition is to assume conditional mean exchangeability $\E[Y^a\mid X,S=s]=\E[Y^a\mid X]$ or that an effect measure such as the conditional average treatment effect $\E[Y^1-Y^0\mid X, S=s]=\E[Y^1-Y^0\mid X]$ is transportable~\citep{colnet2024causal}.

It is well-known that Assumption~\ref*{asmp:internal_validity} and~\ref*{asmp:transportability} together have a testable implication in the law of the observed data, see e.g.~\citet{dahabreh2020benchmarking}. More specifically,  Assumption~\ref*{asmp:internal_validity} and~\ref*{asmp:transportability}  together imply that the following conditional independence must be true
\begin{equation}\label{eq:testable_implication_transportability}
    Y \indep S \mid X, A~.
\end{equation}
Testing~\eqref{eq:testable_implication_transportability} can be done with any suitable conditional independence test and efficient procedures also exists for testing implications from the other variations of the transportability condition, see e.g.~\citet{hussain2023falsification}. However, the underlying premise is always the same: if one would conclude that \eqref{eq:testable_implication_transportability} does not hold, then this could be due to either a violation of Assumption~\ref*{asmp:internal_validity} or~\ref*{asmp:transportability}. Thus, if one believes that Assumption~\ref*{asmp:transportability} must hold yet observes~\eqref{eq:testable_implication_transportability} to be false, that means that Assumption~\ref*{asmp:internal_validity} must be violated in at least one of the environments. This argument becomes particularly strong if treatment has been randomized in one of the environments since any difference between the environments is more likely to be explained by an unmeasured confounder in the remaining environments with observational data.  However,  Assumption~\ref*{asmp:transportability} itself can be controversial as it is also untestable. More specifically, it would be violated if there are unmeasured so-called effect modifiers which are covariates that differ in distribution between environments and modulate treatment heterogeneity. Effect modifiers are distinct from confounders as effect modifiers only need to be a cause of the outcome of interest. Thus, confounders can be effect modifiers but not vice versa, meaning that we often might expect to have unmeasured effect modifiers present even where are no unmeasured confounders.

The primary distinction between our falsification strategy and a transportability-based falsification strategy lies in the assumption our strategy relies on: instead of using Assumption~\ref*{asmp:transportability}, our strategy employs Assumption~\ref*{asmp:independence} to derive an alternative jointly testable implication. This comparison also highlights their underlying similarity: both strategies aim to combine two untestable assumptions to generate a testable implication, enabling the joint falsification of these otherwise untestable assumptions.

\section{Proofs} \label{app:proofs}

\subsection{Proof of Theorem~\ref*{thm:testable_implication}}\label{app:thm:testable_implication}

\begin{proof}
    Using the conditions from Assumption~\ref*{asmp:internal_validity} and that $h_s(X)=\gamma_s^\top\widetilde{\phi}(X,A=a)$ is a correctly specified model for $\E[Y\mid A, X, S=s]$, we can write
    \begin{align*}
        \beta_s^\top\phi(X,A=a)& = \E[Y^a\mid X, S=s]  \\
        &= \E[Y^a \mid X, A=a, S=s]  & Y^a\indep A \mid (X, S=s) \\
        &= \E[Y \mid X, A=a, S=s]  & A=a \Rightarrow Y^a=Y \\
        &= \gamma_s^\top\widetilde{\phi}(X,A=a)~.
    \end{align*}
    Because we assumed $\widetilde{\phi}(X,A=a)=C{\phi}(X,A=a)$ for some invertible matrix $C$ (Assumption~\ref*{asmp:correct_representation}), it follows for $s=1,\dots, K$ that $\gamma_s=(C^{-1})^\top \beta_s$. Furthermore, using that $e_x(X)=\omega_s^\top\widetilde{\psi}(X,A=a)$ is a correctly specified model for $\E[A\mid, X, S=s]$ and $\widetilde{\psi}(X,A)=D\psi(X,A)$ for some invertible matrix $D$ (again, Assumption~\ref*{asmp:correct_representation}), it follows using similar arguments that $\omega_s =(D^{-1})^\top \alpha_s$  for $s=1,\dots,K$. To conclude the proof, using Assumption~\ref*{asmp:independence} which states that there exists a distribution $P(\alpha,\beta)=P(\alpha)P(\beta)$, we  observe that $(\omega_s, \gamma_s)$ are distributed according to a distribution defined as $P(\omega, \gamma):=P((C^{-1})^\top\alpha, (D^{-1})^\top\beta)$. It is well-known that if $\alpha_s$ and $\beta_s$ are independent random variables, then their transformations $(C^{-1})^\top\alpha_s$ and $ (D^{-1})^\top\beta_s$ are also independent; see \citet[Chapter~4.2]{grimmett2020probability}. Thus, we have $P(\alpha, \beta) = P(\alpha)P(\beta) \iff P(\omega, \gamma) = P(\omega)P(\gamma)$.
\end{proof}

\subsection{Proof of Lemma~\ref*{lem:linear_setting}}\label{app:lem:linear_setting}

Before we can prove the lemma, we need the following auxiliary result.

\begin{lemma} \label{lem:gaussian_products}
    Consider two Normal probability densities $f_1(x) = \frac{1}{\sqrt{2\pi\sigma_1^2}} \exp \left( -\frac{1}{2\sigma_1^2} \left( x - \mu_1\right)^2 \right)$ and $f_2(x) = \frac{1}{\sqrt{2\pi\sigma_2^2}} \exp \left( -\frac{1}{2\sigma_2^2} \left( x - \mu_2\right)^2 \right)$ with $\sigma_1,\sigma_2>0$. We then have that the product of the densities is $f_1(x) \cdot f_2(x)$  is proportional to a Normal density $\frac{1}{\sqrt{2\pi\sigma_{12}^2}} \exp \left( -\frac{1}{2\sigma_{12}^2} \left( x - \mu_{12}\right)^2 \right)$ where
    \begin{equation*}
        \mu_{12} = \frac{\mu_{1}\sigma_2^2 + \mu_2\sigma_1^2}{\sigma_1^2 + \sigma_2^2} \text{  and  } \sigma_{12}^2 = \frac{\sigma_1^2 \sigma_2^2}{\sigma_1^2 + \sigma_2^2}~.
    \end{equation*}
\end{lemma}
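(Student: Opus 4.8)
The plan is to multiply the two densities, merge their exponents into a single quadratic in $x$, and complete the square. Since the claim is only \emph{up to proportionality}, every factor not depending on $x$ may be absorbed into the proportionality constant, so I only need to track the quadratic-in-$x$ part of the exponent and identify its vertex and curvature.

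First I would write the product as $f_1(x) f_2(x) = \frac{1}{2\pi\sigma_1\sigma_2}\exp\left(-\tfrac{1}{2}Q(x)\right)$, where $Q(x) = \frac{(x-\mu_1)^2}{\sigma_1^2} + \frac{(x-\mu_2)^2}{\sigma_2^2}$. Expanding gives $Q(x) = A x^2 - 2Bx + C$ with $A = \sigma_1^{-2} + \sigma_2^{-2}$, $B = \mu_1\sigma_1^{-2} + \mu_2\sigma_2^{-2}$, and $C = \mu_1^2\sigma_1^{-2} + \mu_2^2\sigma_2^{-2}$. Completing the square yields $Q(x) = A\left(x - B/A\right)^2 + \left(C - B^2/A\right)$, where the last term is independent of $x$.

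The key identifications are then purely algebraic: $A = (\sigma_1^2+\sigma_2^2)/(\sigma_1^2\sigma_2^2) = 1/\sigma_{12}^2$ and $B/A = (\mu_1\sigma_2^2+\mu_2\sigma_1^2)/(\sigma_1^2+\sigma_2^2) = \mu_{12}$. Substituting these, $-\tfrac{1}{2}Q(x) = -\tfrac{1}{2\sigma_{12}^2}(x-\mu_{12})^2 + (\text{term independent of } x)$. Exponentiating and folding the constant term together with the leading prefactor $\frac{1}{2\pi\sigma_1\sigma_2}$ into a single constant gives $f_1 f_2 \propto \exp\left(-(x-\mu_{12})^2/(2\sigma_{12}^2)\right)$, which is proportional to the claimed Normal density with mean $\mu_{12}$ and variance $\sigma_{12}^2$.

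There is no substantive obstacle here: the argument is a single completion of the square, and the only care required is the bookkeeping to verify that $A$ matches $1/\sigma_{12}^2$ and $B/A$ matches $\mu_{12}$ exactly. The positivity assumption $\sigma_1,\sigma_2 > 0$ guarantees $A > 0$, so the resulting quadratic genuinely has the shape of an integrable Normal density rather than a degenerate or inverted one.
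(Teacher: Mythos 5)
Your proof is correct and follows essentially the same route as the paper: both multiply the densities, combine the exponents into a single quadratic $Q(x)$, and complete the square to read off $\mu_{12}$ as the vertex and $\sigma_{12}^2$ as the inverse of the leading coefficient, absorbing all $x$-independent factors into the proportionality constant. Your explicit naming of the coefficients $A$, $B$, $C$ is a slightly cleaner bookkeeping device than the paper's in-line fractions, but the argument is the same.
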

\begin{proof}
    Note that 
    \begin{equation*}
        f_1(x)\cdot f_2(x) = \frac{1}{2\pi \sigma_1 \sigma_2} \exp \left( -\frac{1}{2} \underbrace{\left[\frac{(x-\mu_1)^2}{\sigma_1^2} + \frac{(x-\mu_2)^2}{\sigma_2^2} \right]}_{Q} \right)~,
    \end{equation*}
    where the expression inside the exponential function can be written as
    \begin{align*}
        Q &= \frac{\left(\sigma_1^2 + \sigma_2^2\right)x^2 - 2\left(\mu_1 \sigma_2^2 + \mu_2\sigma_1^2\right) x + \mu_1^2 \sigma_2^2 + \mu_2^2\sigma_1^2 }{\sigma_1^2\sigma_2^2} \\
        & = \frac{x^2 - 2\frac{\left(\mu_1 \sigma_2^2 + \mu_2\sigma_1^2\right)}{\left(\sigma_1^2 + \sigma_2^2\right)} x + \frac{\mu_1^2 \sigma_2^2 + \mu_2^2\sigma_1^2 }{\left(\sigma_1^2 + \sigma_2^2\right)}}{\frac{\sigma_1^2\sigma_2^2}{\left(\sigma_1^2 + \sigma_2^2\right)}} 
        \\
        &= \frac{\left(x - \frac{\left(\mu_1 \sigma_2^2 + \mu_2\sigma_1^2\right)}{\left(\sigma_1^2 + \sigma_2^2\right)}\right)^2 }{\frac{\sigma_1^2\sigma_2^2}{\left(\sigma_1^2 + \sigma_2^2\right)}} + \frac{\frac{\left(\mu_1 \sigma_2^2 + \mu_2\sigma_1^2\right)}{\left(\sigma_1^2 + \sigma_2^2\right)} + \frac{\mu_1^2 \sigma_2^2 + \mu_2^2\sigma_1^2 }{\left(\sigma_1^2 + \sigma_2^2\right)}}{\frac{\sigma_1^2\sigma_2^2}{\left(\sigma_1^2 + \sigma_2^2\right)}}~.
    \end{align*}
    As the second term on the last line is independent of $x$, we finish the proof by observing that $f_1(x)\cdot f_2(x)$ is up to some constant proportional to 
    \begin{equation*}
        \frac{1}{\sqrt{2\pi \frac{\sigma_1^2 \sigma_2^2}{\sigma_1^2 + \sigma_2^2}}}\exp\left( -\frac{1}{2} \frac{\left(x - \frac{\left(\mu_1 \sigma_2^2 + \mu_2\sigma_1^2\right)}{\left(\sigma_1^2 + \sigma_2^2\right)}\right)^2 }{\frac{\sigma_1^2\sigma_2^2}{\left(\sigma_1^2 + \sigma_2^2\right)}} \right)~.
    \end{equation*}
\end{proof}
Next, we proceed with the proof of lemma~\ref{lem:linear_setting}.
\begin{proof}
    To simplify notation, we will drop the subscript $s$ for all parameters. We start with $\E[A\mid X=x, S=s]$, which can be written as
    \begin{equation*}
        \begin{aligned}
            \E[A\mid X, S=s] & = \E[\alpha^{(0)} + \alpha^{(X)} X + \alpha^{(U)} U + \varepsilon_A\mid X, S=s] \\
            &= \alpha^{(0)}  + \alpha^{(X)} X  + \alpha^{(U)} \E[U\mid X, S=s] + \underbrace{\E[\varepsilon_A \mid X, S=s]}_{=0} \\
            & \overset{(a)}{=} \alpha^{(0)}  + \alpha^{(X)} X  + \alpha^{(U)} \mu^{(U)} \\
            & = 
            \left(\begin{bmatrix}
                \alpha_{0}\\ \alpha_{X}
            \end{bmatrix}
            +
            \begin{bmatrix}
                \alpha^{(U)} \mu^{(U)}\\
                0
            \end{bmatrix}\right)^\top \begin{bmatrix}
                1\\ 
                X
                \end{bmatrix}
        \end{aligned}
    \end{equation*}
    where in (a) we use that $X\indep U\mid (S=s)$ meaning that $\E[U\mid X, S=s]\E[U\mid S=s]=\mu^{(U)}$.
    
    Next, we continue with $\E[Y\mid X, A, S=s]$, which can be expressed as
    \begin{equation} \label{eq:exp_Y_given_XAS}
        \begin{aligned}
        \E&\left[\beta^{(0)} + \beta^{(X)} X + \beta^{(U)} U + A  \left(\beta^{(A)} +  \beta^{(AX)} X + \beta^{(AU)} U \right) + \varepsilon_Y\mid X, A, S=s\right] \\
        & =  \beta^{(0)} + \beta^{(X)} X + A  \left(\beta^{(A)} + \beta^{(AX)} X\right) \\
         & \quad + \left(\beta^{(U)} + A \beta^{(AU)} \right)  \E[U\mid X, A, S=s] + \underbrace{\E[\varepsilon_Y\mid X, A, S=s]}_{=0}~.
         \end{aligned}
    \end{equation}
    To evaluate the conditional expectation $\E[U\mid X, A, S=s]$, we use Bayes rule to rewrite the probability density function 
    \begin{align*}
        f(U\mid X, A, S) = \frac{f(A\mid X, U, S) f(U\mid X, S)}{f(A\mid X, S)}~.
    \end{align*}
    Firstly, note that $f(U\mid X, S)=f(U\mid S)$ follows from that $X \indep U \mid (S=s)$. The density $f(U\mid S)$ is corresponds to the density of $N(\mu^{(U)}, (\sigma^{(U)})^2)$. Secondly, we can express $f(A\mid X, U, S)$ differently by exploiting that $A=\alpha^{(0)} + \alpha^{(X)} X + \alpha^{(U)} U + \varepsilon_A$ as follows,
    \begin{align*}
        f(A=a &\mid X=x, U=u, S=s)  = f(\varepsilon_A= a - \alpha^{(0)} - \alpha^{(X)} x  - \alpha^{(U)}  u \mid S=s) \\
        & \overset{(b)}{=} \frac{1}{\sqrt{2\pi(\sigma^{(A)})^2}} \exp \left(-\frac{1}{2(\sigma^{(A)})^2} (a - \alpha^{(0)} - \alpha^{(X)} x  - \alpha^{(U)}  u)^2 \right) \\
        & \overset{(c)}{=} \frac{1}{\sqrt{2\pi(\sigma^{(A)})^2}} \exp \left(-\frac{1}{2\left(\frac{\sigma^{(A)}}{\alpha^{(U)}}\right)^2} \left( (\alpha^{(U)})^{-1} \left(a - \alpha^{(0)} - \alpha^{(X)} x\right) - u\right)^2 \right)
    \end{align*}
    where (b) follows from that $\varepsilon_A \mid (S=s) \sim N(0, (\sigma^{(A)})^2)$. In (c) we reshuffle terms to explicitly break out $u$ inside the exponential function. From inspecting the expression on the last line, we note that it looks like an unnormalized probability density function w.r.t $u$ for a Normal distribution. If we rescale $f(A=a \mid X=x, U=u, S=s)$ by $\frac{1}{\sigma^{(U)}}$, we obtain an probability density function w.r.t $u$ for the Normal distribution $N\left (\alpha^{(U)})^{-1} \left(a - \alpha^{(0)} - \alpha^{(X)} x \right), \left(\frac{\sigma^{(A)}}{\alpha^{(U)}}\right)^2 \right)$. This observation together with the results from lemma~\ref{lem:gaussian_products} allows us to show that the product of densities $f(A\mid X, U, S) f(U\mid X, S)$ also corresponds to an unnormalized, scaled probability density function of a Normal distribution with mean equal to
    \begin{equation} \label{eq:conditional_expectation}
       \frac{(\alpha^{(U)})^{-1} \left(a - \alpha^{(0)} - \alpha^{(X)} x\right) (\sigma^{(U)})^2  +  \mu^{(U)} \left(\frac{\sigma^{(A)}}{\alpha^{(U)}}\right)^2}{(\sigma^{(U)})^2  +\left(\frac{\sigma^{(A)}}{\alpha^{(U)}}\right)^2}.
    \end{equation}
    Since we can re-normalize $f(A\mid X, U, S) f(U\mid X, S)$ with $1/f(A\mid X, S)$, we have that the conditional expectation 
    $$\E[U\mid X, A, S=s] = \int u \frac{f(A\mid X, U=u, S=s) f(U=u\mid X, S=s)}{f(A\mid X, S=s)} du$$
    is equal to~\eqref{eq:conditional_expectation}. 

    Plugging~\eqref{eq:conditional_expectation} back into~\eqref{eq:exp_Y_given_XAS}, we have that
    \begin{align*}
        \E[Y\mid X, A, S=s] & = \beta^{(0)} + \beta^{(X)} X + A  \left(\beta^{(A)} + \beta^{(AX)} X\right)  \\
        &\quad + \left(\beta^{(U)} + A \beta^{(AU)}\right ) \left( \frac{(\alpha^{(U)})^{-1} \left(A - \alpha^{(0)} - \alpha^{(X)} X\right) (\sigma^{(U)})^2  +  \mu^{(U)} \left(\frac{\sigma^{(A)}}{\alpha^{(U)}}\right)^2}{(\sigma^{(U)})^2  +\left(\frac{\sigma^{(A)}}{\alpha^{(U)}}\right)^2} \right)~.
    \end{align*}
    To conclude the proof, we simplify the above expression to the form $\E[Y\mid X, A, S=s]=\gamma^\top [1,X,A,AX, A^2]^\top$ where 
    \begin{equation*}
        \gamma = 
        \begin{bmatrix}
        \beta^{(0)} \\
        \beta^{(X)}\\
        \beta^{(A)} \\
        \beta^{(AX)} \\
        0
        \end{bmatrix}
        + 
        \delta 
        \begin{bmatrix}
        - \beta^{(U)} \left( \frac{\alpha^{(0)}(\sigma^{(U)})^{2} }{\alpha^{(U)}} - \mu^{(U)}\left(\frac{\sigma^{(A)}}{\alpha^{(U)}}\right)^2 \right) \\
        -  \beta^{(U)} \frac{\alpha^{(X)}(\sigma^{(U)})^{2}}{\alpha^{(U)}}  \\
        \beta^{(U)} \frac{(\sigma^{(U)})^2}{\alpha^{(U)}}  - \beta^{(AU)} \left( \frac{\alpha^{(0)}(\sigma^{(U)})^{2}}{\alpha^{(U)}} - \mu^{(U)}\left(\frac{\sigma^{(A)}}{\alpha^{(U)}}\right)^2 \right) \\
        - \beta^{(AU)}\frac{\alpha^{(X)} (\sigma^{(U)})^{2}}{\alpha^{(U)}}  \\[0.1cm]
        \beta^{(AU)}  \frac{(\sigma^{(U)})^2}{\alpha^{(U)}} 
        \end{bmatrix}~.
    \end{equation*}
    and $\delta =\left( (\sigma^{(U)})^2  +\left(\frac{\sigma^{(A)}}{\alpha^{(U)}}\right)^2\right)^{-1}$. 
\end{proof}

\subsection{Proof of Theorem~\ref*{thm:linear_falsification_strategy}}\label{app:thm:linear_falsification_strategy}
Before proving the theorem, we present the following result that we will need later.
\begin{lemma}\label{lem:special_cases}
    Under the data generating process in~\eqref{eq:linear_example} and $\{\alpha^{(U)} = 0\}$, we have that $\E[Y\mid X, A, S=s]=\alpha_s^\top[1,X]^\top$ and $\E[Y\mid X, A, S=s] = \gamma_s^\top[1,X,A,AX,A^2]^\top$ where 
    \begin{equation*}
        \alpha_s = 
        \begin{bmatrix}
            \alpha_{s}^{(0)} \\
            \alpha_{s}^{(X)}
        \end{bmatrix},\;\;
        \gamma_s = 
        \begin{bmatrix}
            \beta_{s}^{(0)} \\
            \beta_{s}^{(X)} \\
            \beta_{s}^{(A)} \\
            \beta_{s}^{(AX)} \\
            0
        \end{bmatrix}
        +
        \begin{bmatrix}
            \beta_{s}^{(U)}\mu_s^{(U)}\\
            0 \\
            \beta_{s}^{(AU)}\mu_{s}^{(U)}\\
            0 \\
            0
        \end{bmatrix}~.
    \end{equation*}
    On the other hand, if instead of $\{\alpha^{(U)} = 0\}$ we have that $\{\beta^{(U)}=0, \beta^{(AU)}=0\}$ then 
    \begin{equation*}
        \alpha_s = 
        \begin{bmatrix}
            \alpha_{s}^{(0)}  + \alpha_{s}^{(U)} \mu_{s}^{(U)}\\
            \alpha_{s}^{(X)}
        \end{bmatrix},\;\;
        \gamma_s = 
        \begin{bmatrix}
            \beta_{s}^{(0)} \\
            \beta_{s}^{(X)} \\
            \beta_{s}^{(A)} \\
            \beta^{(AX)} \\
            0
        \end{bmatrix}~.
    \end{equation*}
\end{lemma}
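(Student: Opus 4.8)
The plan is to prove each of the two cases by directly evaluating the treatment conditional expectation $\E[A\mid X, S=s]$ and the outcome conditional expectation $\E[Y\mid X, A, S=s]$ under the stated restriction on the parameters, exploiting that in each case one of the two structural equations in~\eqref{eq:linear_example} loses its dependence on $U$. Unlike Lemma~\ref*{lem:linear_setting}, no Gaussian-product computation is needed here, because the problematic term involving $U$ either vanishes outright or collapses to its unconditional mean. I would emphasize at the outset that one cannot simply take $\alpha^{(U)}\to 0$ in the formula of Lemma~\ref*{lem:linear_setting}: that derivation divides by $\alpha^{(U)}$ through the change of variables in its Bayes-rule step, so the degenerate case must be handled on its own footing.

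For the first case $\{\alpha^{(U)}=0\}$, the treatment equation reduces to $A = \alpha_s^{(0)} + \alpha_s^{(X)} X + \varepsilon_A$, so $U$ drops out and $\E[A\mid X, S=s] = \alpha_s^{(0)} + \alpha_s^{(X)} X$ follows immediately from $\E[\varepsilon_A\mid X, S=s]=0$, yielding the claimed fitted treatment parameter. The crux is the outcome mechanism: expanding $\E[Y\mid X, A, S=s]$ as in~\eqref{eq:exp_Y_given_XAS} leaves the single nontrivial term $(\beta_s^{(U)} + A\beta_s^{(AU)})\,\E[U\mid X, A, S=s]$. I would then argue $A \indep U \mid (X, S=s)$: conditional on $(X, S=s)$ the treatment is a deterministic function of $\varepsilon_A$ alone, and since $\varepsilon_A$ is independent of $U$ given $S$ (the same exogeneity of $\varepsilon_A$ used implicitly in Lemma~\ref*{lem:linear_setting}, equivalently that $\cov(A,U\mid X,S=s)=0$ under joint Gaussianity), conditioning additionally on $A$ carries no information about $U$. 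Hence $\E[U\mid X, A, S=s] = \E[U\mid X, S=s] = \mu_s^{(U)}$, the last equality using the standing assumption $X\indep U\mid S$. Substituting $\mu_s^{(U)}$ and collecting the coefficients of $1, X, A, AX, A^2$ gives exactly the stated $\gamma_s$, with a zero coefficient on $A^2$.

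For the second case $\{\beta^{(U)}=0,\ \beta^{(AU)}=0\}$, the roles reverse: the outcome equation now contains no $U$ term, so $\E[Y\mid X, A, S=s] = \beta_s^{(0)} + \beta_s^{(X)} X + \beta_s^{(A)} A + \beta_s^{(AX)} AX$ reads off the claimed $\gamma_s$ directly, and no conditional expectation of $U$ is required at all. For the treatment, $\alpha^{(U)}$ may now be nonzero, so I would compute $\E[A\mid X, S=s] = \alpha_s^{(0)} + \alpha_s^{(X)} X + \alpha_s^{(U)}\,\E[U\mid X, S=s]$ and again invoke $X\indep U\mid S$ together with $\E[\varepsilon_A\mid X,S=s]=0$ to replace $\E[U\mid X, S=s]$ with $\mu_s^{(U)}$, producing the stated $\alpha_s = [\,\alpha_s^{(0)} + \alpha_s^{(U)}\mu_s^{(U)},\ \alpha_s^{(X)}\,]^\top$.

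The only genuinely delicate point, and the main obstacle, is the conditional-independence argument $A\indep U\mid (X, S=s)$ in the first case; everything else is bookkeeping. This step is precisely where the degeneracy $\alpha^{(U)}=0$ is used, and it is the structural reason the shared dependence on $\mu_s^{(U)}$ lands in the outcome-fit parameter $\gamma_s$ but not in the treatment-fit parameter $\alpha_s$ (and symmetrically, in the second case, in $\alpha_s$ but not $\gamma_s$) — so that these non-confounding cases induce no coupling between the treatment and outcome parameters. As an optional sanity check I would verify that the $\gamma_s$ obtained here agrees with the $\alpha^{(U)}\to 0$ limit of Lemma~\ref*{lem:linear_setting}, where the vanishing prefactor $\delta_s=O((\alpha^{(U)})^2)$ cancels the $(\alpha^{(U)})^{-2}$ scaling of exactly the terms proportional to $\mu_s^{(U)}$, leaving the $\beta_s^{(U)}\mu_s^{(U)}$ and $\beta_s^{(AU)}\mu_s^{(U)}$ contributions while all other correction terms decay to zero.
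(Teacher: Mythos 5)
Your proposal is correct and takes essentially the same route as the paper's own proof: you compute $\E[A\mid X,S=s]$ and $\E[Y\mid X,A,S=s]$ directly in each case, using that the degeneracy ($\alpha^{(U)}=0$, resp.\ $\beta^{(U)}=\beta^{(AU)}=0$) makes $U$ conditionally independent of the conditioning variables so that $\E[U\mid X,A,S=s]$ (resp.\ $\E[U\mid X,S=s]$) collapses to $\mu_s^{(U)}$ via $X\indep U\mid S$ and the exogeneity of $\varepsilon_A$ --- exactly the paper's argument, with your justification of $A\indep U\mid(X,S=s)$ being if anything slightly more explicit than the paper's one-line appeal to $U\indep X\mid A,S=s$. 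Your optional sanity checks (that Lemma~\ref{lem:linear_setting} cannot simply be evaluated at $\alpha^{(U)}=0$ since its Bayes-rule step divides by $\alpha^{(U)}$, and that the $\alpha^{(U)}\to 0$ limit with $\delta_s=O((\alpha^{(U)})^2)$ nonetheless recovers precisely the $\beta_s^{(U)}\mu_s^{(U)}$ and $\beta_s^{(AU)}\mu_s^{(U)}$ corrections) are correct additions not present in the paper.
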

\begin{proof}
    For the first case with $\{\alpha^{(U)} = 0\}$, we have that $\E[A\mid X, S=s]=\E[\alpha_{s}^{(0)} + \alpha_{s}^{(X)}  X + \varepsilon_A\mid S=s]= \alpha_{s}^{(0)} + \alpha_{s}^{(X)}  X$. Further, we can show
    \begin{align*}
    \E[Y\mid X, A, S=s] & = \beta_{s}^{(0)} + \beta_{s}^{(X)} X + \beta_{s}^{(A)} A + \beta_{s}^{(AX)} AX + \E[\beta_{s}^{(U)} U + \beta_{s}^{(AU)} A U \mid X,A,S=s] \\
    & = \beta_{s}^{(0)} + \beta_{s}^{(X)} X + \beta_{s}^{(A)} A + \beta_{s}^{(AX)} AX + \beta_{s}^{(U)} \mu_s^{(U)} + \beta_{s}^{(AU)}  \mu_s^{(U)} A
    \end{align*}
    where the second equality from that $U\indep X\mid A, S=s$ holds in~\eqref{eq:linear_example} if $\{\alpha^{(U)} = 0\}$. This concludes the first case.

    For the second case with $\{\beta^{(U)}=0, \beta^{(AU)}=0\}$, it follows from the above equation that  
    \begin{align*}
    \E[Y\mid X, A, S=s] & = \beta_{s}^{(0)} + \beta_{s}^{(X)} X + \beta_{s}^{(A)} A + \beta_{s}^{(AX)} AX~.
    \end{align*}
    Meanwhile, for the treatment mechanism we now instead have that
    \begin{align*}
        \E[A\mid X, S=s] & = \alpha_{s}^{(0)} + \alpha_{s}^{(X)} X + \E[\alpha_{s}^{(U)} U \mid S=s] \\
        & =  \alpha_{s}^{(0)} + \alpha_{s}^{(X)} X + \alpha_{s}^{(U)} \mu_s^{(U)}~.
    \end{align*}
\end{proof}
Now, we can proceed with the proof of the Theorem~\ref{thm:linear_falsification_strategy}.

\begin{proof} 
Throughout the proof, we define $\phi(X)=[1,X]^\top$ and $\psi(X)=[1,X, A, AX, A^2]^\top$. We will show that regardless of the presence of the confounder $U$, we can write $\E[A\mid X,S=s]=\omega_s^\top\phi(X)$ and $\E[Y\mid X, A, S=s] = \gamma_s^\top\psi(X,A)$ for some $(\omega_s, \gamma_s)$ and only if and only if $U$ is a confounder will $\omega_s \not\indep \gamma_s$ under some conditions on the parameters $(\alpha_{s}^{(0)}, \alpha_{s}^{(X)}, \alpha_{s}^{(U)},\mu_{s}^{(U)})$.

Note that under Assumption~\ref*{asmp:independence}, it follows that there exists a distribution $(\omega_s, \gamma_s)\sim P(\omega, \gamma)$ since the parameters $(\omega_s, \gamma_s)$ are directly dependent on the parameters $(\alpha_s,\beta_s)\sim P(\alpha, \beta)$, for $s=1,\dots,K$. To determine if $P(\omega, \gamma)=P(\omega)P(\gamma)$, we  have to determine whether $\omega_s$ and $\gamma_s$ both depend on the same parameters from the underlying data-generating process and under what conditions this can create statistical dependencies between $\omega_s$ and $\gamma_s$. 

\paragraph{No unmeasured confounding present} First,  consider the condition that $U$ is not a confounder. There are three cases for which this happens: (1) we have $\{\alpha^{(U)} = 0\}$, (2) we have $\{\beta^{(U)}=0, \beta^{(AU)}=0\}$, and (3) we have both  $\{\alpha^{(U)} = 0\}$ and $\{\beta^{(U)}=0, \beta^{(AU)}=0\}$. For case (1) and (2), it follows immediately from lemma~\ref{lem:special_cases} that $\omega_s$ and $\gamma_s$ have no shared parameters. For the final case (3), it is easy to see that $\omega_s=[\alpha_{s}^{(0)},\;\alpha_{s}^{(X)}]^\top$ and $\gamma_s^\top=[\beta_{s}^{(0)},\;\beta_{s}^{(X)},\;\beta_{s}^{(A)},\;\beta^{(AX)},\; 0]$ where, again, there are no shared parameters between $\omega_s$ and $\gamma_s$. Thus, we can conclude that under all of the cases when $U$ is not a confounder, $(\omega_s,\gamma_s)$ have no shared parameter and thus $\omega_s \indep \gamma_s$.

\paragraph{Unmeasured confounding present} Next, consider the condition that $U$ is a confounder. It follows from lemma~\ref{lem:linear_setting} that if $U$ is a confounder, then both $\omega_s$ and $\gamma_s$ depend on the parameters $(\alpha_{s}^{(0)}, \alpha_{s}^{(X)}, \alpha_{s}^{(U)},\mu_{s}^{(U)})$. Thus, if any of the parameters $(\alpha_{s}^{(0)}, \alpha_{s}^{(X)}\alpha_{s}^{(U)},\mu_{s}^{(U)})$ vary across different environments, which happens if we assume there exist a non-degenerate distribution for at least of one these parameters, it follows that $\omega_s \not\indep \gamma_s$. This concludes the proof.
\end{proof}

\section{Extension to implicit feature representations} 
\label{app:kernel_alg}
In this section, we provide a sketch for replacing the features representation $\{\widetilde\phi, \widetilde\psi\}$ in our falsification algorithm with an implicit feature representation through the use of kernel methods~\citep{scholkopf2002learning}. More specifically, we let $\widetilde{\phi}(x)$ be the implicit feature representation whose inner product is given by the kernel $k(x,x)=\langle\widetilde{\phi}(x), \widetilde{\phi}(x)\rangle_{\mathcal{H}}$ defined on $\mathcal{X}$ with corresponding RKHS $\mathcal{H}$ and, similarly, let $\widetilde{\psi}(x,a)$ be the implicit feature representation with inner product given by $h\left((x,a),(x,a)\right)=\langle\widetilde{\psi}(x,a), \widetilde{\psi}(x,a)\rangle_{\mathcal{G}}$ defined on  $\mathcal{X} \otimes\mathcal{A}$ with corresponding RKHS $\mathcal{G}$. 

With some minor modifications, we can run our falsification algorithm without having to compute $\widetilde{\phi}(x)$ and $\widetilde{\psi}(x)$. To illustrate this, we impose the restriction that we use the same number of observations from each environment, denoted with $n$. We shall focus on estimators given by $\widehat{\omega}_s = \argmin_{\omega} = || \mathbf{A}_s - \Phi_s \omega||_2^2 + \lambda_{1} ||\omega||_2$ and $\widehat{\gamma}_s = \argmin_{\omega} = || \mathbf{Y}_s - \Psi_s \gamma||_2^2 + \lambda_{2} ||\gamma||_2$ for some constants $\lambda_1,\lambda_2>0$. The above optimization problems correspond to kernel ridge regression problem, for which it is well-known that the estimators have a closed form solution, namely \begin{equation*}
\widehat{\omega}_s =(K_s+n\lambda_1 I_{n})^{-1}\mathbf{A}_s \text{ and } \widehat{\gamma}_s =(H_s+n\lambda_2 I_{n})^{-1}\mathbf{Y}_s
\end{equation*}
where $K_s=k(\mathbf{X}_s,\mathbf{X}_s)$ and $H_s=h((\mathbf{X}_s,\mathbf{A}_s), (\mathbf{X}_{s'},\mathbf{A}_{s'}))$ are the Gram matrices. The above estimators are however not always be computable since they can, depending on the choice of kernel, be infinite-dimensional. This also makes it infeasible to directly compute the covariance matrix $\Sigma=\cov(\omega,\gamma)$. However, it is luckily still possible to compute the Frobenius norm $||\Sigma||_2$. Using results from lemma 1 in~\citet{gretton2005kernel}, we can rewrite $||\Sigma||_2 = \E_{P(\omega,\gamma)}[\omega^\top\omega\; \gamma^\top\gamma] + \E_{P(\omega)}[\omega^\top\omega]\E_{P(\gamma)}[ \gamma^\top\gamma] - 2 \E_{P(\omega,\gamma)}[\E_{P(\omega)}[\omega^\top\omega]\E_{P(\gamma)}[ \gamma^\top\gamma]]$. From this equality, it follows that we can compute $||\Sigma||_2$ by inspecting the inner products $\widehat{\omega}_s^\top \widehat{\omega}_{s'}$ and $\widehat{\gamma}_s^\top \widehat{\gamma}_{s'}$ for all $s,s'\in\{1,\dots,K\}$. Interestingly, these inner products can be computed as follows
\begin{align*}
    \widehat{\omega}_s^\top \widehat{\omega}_{s'} &= \mathbf{A}_s^\top (K_s^\top +n\lambda_1 I_{n})^{-1} (K_{s'}+n\lambda_1 I_{n})^{-1} \mathbf{A}_{s'}\\
    \widehat{\gamma}_s^\top \widehat{\gamma}_{s'} &= \mathbf{Y}_s^\top (H_s^\top +n\lambda_1 I_{n})^{-1} (H_{s'}+n\lambda_1 I_{n})^{-1} \mathbf{Y}_{s'}
\end{align*}
which means that $||\Sigma||_2$ can be computed and we can in principle statistically test for independence of $\omega$ and $\gamma$ with some implicit feature representations $\{\widetilde\phi, \widetilde\psi\}$. For future work, it remains unknown how to best implement this algorithm and investigate how our theory needs to be modified for it.

\section{Experimental details}

\subsection{Sampling from data-generating process with polynomial basis functions} \label{app:synthetic_data} 
We generated observational datasets as follows. For each environment $s=1,\dots, K$, we obtain $i=1,
\dots, N$ individuals by first sampling a set of $d$-dimensional covariates according to $X_i\sim N(\mu_s^{(X)}, \frac{1}{\sqrt{d}}\Sigma)$ with the mean $\mu_s^{(X)}\in\mathbb{R}^d\sim N(\mathbf{0}, \frac{1}{4}\mathbf{I}_d)$ where $\mathbf{I}_d$ was a $d\times d$ identity matrix and the covariance matrix $\Sigma$ of shape $d\times d$ had its diagonal elements set to $2$ and its off-diagonal elements set to $0.1$. Thereafter, we sampled the treatment $T_i$ and outcome $Y_i$ according to~\eqref{eq:dgp} with the features representations $\psi(X) = [1, X_1, \dots, X_d, X_1^p, \dots, X_d^p]^\top$ and $\phi(X,A) = [1, X_1, \dots, X_d, X_1^p, \dots, X_d^p, A]^\top$ being polynomial basis functions of degree $p$. The noise variables $\varepsilon_A$ and $\varepsilon_Y$ were mean-zero Normal distributed with their standard deviation set to $0.5$. Each element in $\alpha_s$ where sampled uniformly from the set $\{-1,1\}$ while each element in $\beta_s$ was set to $1$, except for the elements in $\alpha_s$ corresponding to the intercepts which were sampled according to $N(0,1)$. Only the intercept elements were resampled for each new environment, whereas the remaining coefficients in $(\alpha_s,\beta_s)$ where kept fixed for all environments. When introducing an unmeasured confounder, we additionally sampled an one-dimensional covariate $U_i\sim N(\mu_s^{(U)}, 2)$ with its mean $\mu_s^{(U)}\sim N(0,1)$. Then, we added $U_i$ directly to $T_i$ and $Y_i$. For simplicity, we let each environment have the same number of samples $N=n_1=\dots=n_K$ even though all methods also work if the number of samples per environment differ.

\subsection{Sampling from data-generating process in~\eqref{eq:linear_example}} \label{app:linear_example_data}

We sampled a multi-environment dataset with $K=250$ environments and 1000 samples per environment according to the data-generating process described in Section~\ref{sec:linear_case}:
\begin{equation}
    \begin{aligned}
        A &= \alpha_s^\top \psi(X) + \alpha_{s}^{(U)} U + \varepsilon_A \\
        Y^a &= \beta_s^\top \phi(X,A=a) + \left(\beta_s^{(U)} + a\beta_s^{(AU)}\right) U + \varepsilon_Y
    \end{aligned}
\end{equation}
where we let $\psi(X)=[1,X]^\top$ and $\phi(X,A) = [1,X,A,AX]^\top$, the noise variables were sampled according to $\varepsilon_A\sim N(0,\frac{1}{8})$ and $\varepsilon_Y\sim N(0,\frac{1}{8})$, and the covariates were sampled according $X\sim N(\mu_X, 1)$ and $U\sim N(\mu_U,1)$. By default,  we set the parameters as $\alpha_s=[\frac{1}{2}, \frac{1}{3}]^\top$, $\beta_s=[\frac{1}{2}, \frac{1}{3}, \frac{1}{2}, \frac{1}{3}]^\top$, $\mu_X=1$, and $\mu_U=1$. To impose the presence of an unmeasured confounder, we would set the remaining parameters  $(\alpha_s^{(U)}, \beta_s^{(U)},\beta_s^{(AU)})$ to $\frac{1}{4}$ and otherwise set them to 0. To introduce changes in the parameters among environments, we would select one of the parameter values and overrule the above default values by sampling from a uniformly from the range $[0.1, 3.0]$ for each new environment.

\subsection{Generating Twins semi-synthetic dataset} \label{app:twins_data}
We use data from twin births in the USA between 1989-1991~\citep{almond2005costs} to construct an multi-environment observational dataset with a known causal structure. The dataset contains 46 covariates related to pregnancy, birth, and parents. As many covariates are highly imbalanced and have low variance, we select a subset of the covariates for generating the semi-synthetic dataset. 

As the environment label we used the birth state and as covariates we used the following variables (variable names from the dataset documentation are shown in parenthesis): birth month (birmon), father's age (dfageq), number of live births before twins (dlivord\_min),  total number of births before twins (dtotord\_min), gestation type (gestat10),  mom's age (mager8), mom's education (meduc6), mom's place of birth (mplbir), and number of prenatal visits (nprevistq).

The treatment and outcome were generated using the same procedure described in Section~\ref{app:synthetic_data}, with one key difference: the synthetic covariates were replaced by real-world covariates from the Twins dataset. Prior to generating the treatment and outcome, the covariates were standardized. Each time a semi-synthetic dataset was created using the Twins dataset covariates, five of the chosen covariates were randomly selected as the confounders. A polynomial degree of $p=2$ was consistently used throughout all~experiments.

\subsection{Kernel conditional independence testing with the HGIC approach} \label{app:hgic_implementation}
When using the original implementation of the hierarchical graph independence constraint (HGIC) approach described in~\citet{karlsson2023detecting} as a baseline in our experiments, we noted that their implementation sometimes would not be properly calibrated (i.e., elevated Type 1 error above $\alpha=0.05$). For this reason, we introduced some modifications to their method that resolved this issue. Note that the modification we describe below were only necessary when using HGIC with the kernel conditional independence test, and not the Pearson conditional independence test used in some other experiments.

The elevated Type 1 errors was caused by that HGIC combines p-values from multiple independence tests using Fisher's method, which employs the test statistic $T=\sum_k \log p_k$ where $p_k$ where are the p-values from the tests~\citep{fisher1925statistical}. This modification allowed HGIC to use all observations in the multi-environment dataset and was observed to help increase the falsification test's power. However, we noticed in our own simulations that a poorly calibrated conditional independence test can cause the combination of p-values to amplify type I errors. So to address this issue, we made two modifications to the original HGIC implementation.

First, we improved calibration by switching to permutation-based calibration, replacing the original Gamma distribution approximation that is also commonly used for KCIT. Furthermore, we adopted the KCIT implementation from the \textit{causal-learn} Python package~\citep{zheng2024causal} which allowed for optimizing the kernel width hyperparameter in the test using Gaussian process regression.

Second, although the above changes improved KCIT’s calibration, Fisher's method still sometimes amplified type I errors. To mitigate this, we replaced it with Tippett's method, which uses $T=\min_k p_k$ as the test statistic~\citep{tippett1931methods}. Like Fisher's method, Tippett's method emphasizes the smallest p-values~\citep{heard2018choosing} but we found Tippett's method to be more conservative under the null. Testing on a simple conditional independence scenario confirmed that Tippett's method worked better with KCIT, while retaining the benefits of increased power in combining p-values.

To illustrate the difference between our modified implementation and the original implementation described in~\citet{karlsson2023detecting}, we include an experiment using the data-generating process with polynomial basis functions described in Appendix~\ref{app:synthetic_data}. We used $K=100$ environments with $N=50$ samples per environment and $d=1$ observed confounder, and set the polynomial degree to $p=2$. Here, both implementations combine 25 p-values. As the results in Table~\ref{tab:comparison_hgic} show, our implementation achieved a better Type 1 error while retaining similar power to the original implementation.

\begin{table}[t]
    \centering
    \caption{Comparison of the new and old  HGIC implementation using the data-generating process with polynomial basis functions. The average falsification rate and standard error (in parenthesis) is reported from 250 repetitions.}
    \label{tab:comparison_hgic}
    \begin{tabular}{lll}
\toprule
Method &  No unmeasured confounder & Unmeasured confounder present  \\
\midrule
Modified HGIC implementation & 0.04 (.01) & 0.88 (.02) \\
Original HGIC implementation & 0.28 (.03) & 0.80 (.03) \\ 
\bottomrule
\end{tabular}

\end{table}

\subsection{Additional experiments}\label{app:additional_experiments}

We have included three additional experiments in this section to complement the experiments in the main paper. 

First, repeating the same setup as in the experiment presented in Section~\ref{sec:efficiency_experiment}, we include results that confirmed that all methods have controlled Type 1 errors in the well-specified setting. These results are shown in Figure~\ref{fig:combined_study_no_confounders}.

Secondly, we conducted an ablation study to highlight the importance of using bootstrapping on top of the permutation-based test in our procedure. We repeated the same setup as in Section~\ref{sec:model_specification_experiment}, but implemented permutation-based testing without bootstrapping.  As shown in Figure~\ref{fig:ablation_study_bootstrapping}, bootstrapping is crucial for maintaining Type 1 errors below $\alpha=0.05$, even with an increased sample size.

Lastly, we compared all methods under misspecification across several scenarios in the data-generating process described in Appendix~\ref{app:synthetic_data}. These scenarios included the presence or absence of an unmeasured confounder, whether transportability holds by sampling the intercept term in $\beta_s$ from $N(0,1)$ across environments, and whether the underlying data-generating process (DGP) was linear ($p=1$) or nonlinear ($p=3$) The results in Table~\ref{tab:simulation_study_B1} show that misspecification led to elevated Type 1 errors (falsifying without an unmeasured confounder) for all methods. Additionally, transportability violations caused higher Type 1 errors for the transportability test, while our proposed algorithm remained unaffected.

\begin{figure}[ht]
    \centering
    \includegraphics[width=\linewidth]{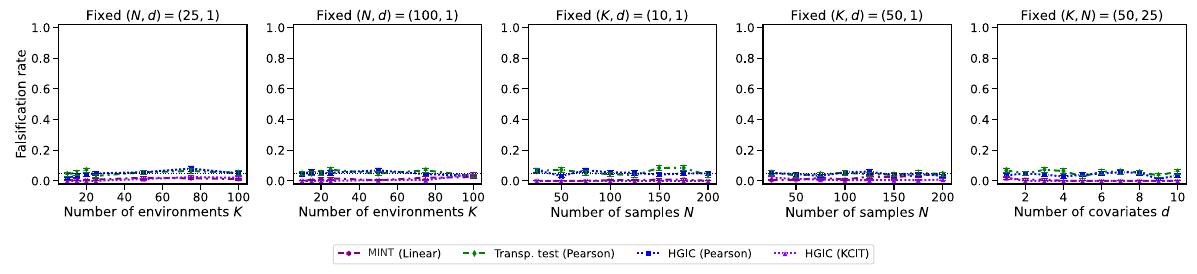}
    \caption{\small Same experiment as in Figure~\ref{fig:combined_study} but with no unmeasured confounding being present. Comparison of falsification rate  when varying either the number of environment $K$, the number of samples per environment $N$, or the number of observed covariates $d$. The error bars show the standard error over 250 repetitions.}
\label{fig:combined_study_no_confounders}
\end{figure}

\begin{figure*}[ht]
     \centering
     \begin{subfigure}{0.35\textwidth}
         \centering
          \includegraphics[width=\linewidth]{figures/complexity-vs-reject-2025-01-15_091602.pdf}
          \caption{With bootstrapping}\label{fig:detection_vs_complexity_bootstrapping}
     \end{subfigure}
     ~
     \begin{subfigure}{0.35\textwidth}
         \centering
         \includegraphics[width=\linewidth]{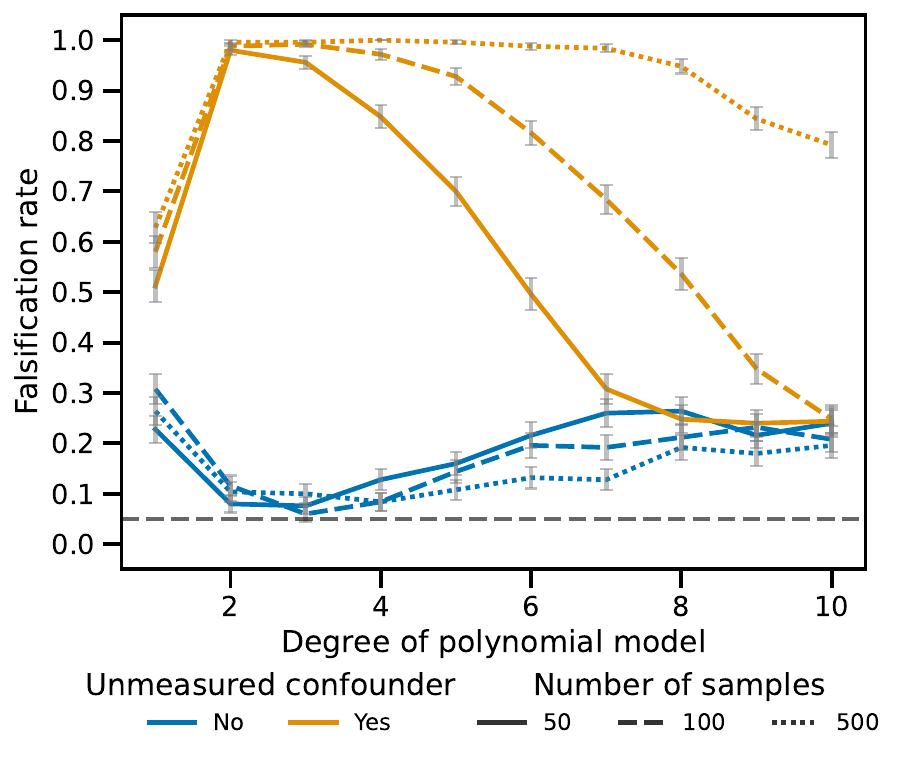}
         \caption{Without bootstrapping}
         \label{fig:detection_vs_complexity_no_bootstrapping}
     \end{subfigure}
     \caption{An ablation study showing the falsification rate our proposed algorithm using permutation-based testing with bootstrapping versus without bootstrapping. We resample 1000 times when using bootstrapping. The error bars show the standard error over 250 repetitions. The black dotted lines correspond to the chosen significance level $\alpha=0.05$. }
     \label{fig:ablation_study_bootstrapping}
\end{figure*}

\begin{table}[t]
    \centering
    \caption{Comparison of different approaches under various scenarios with $K=100$ environments and $N=100$ samples per environment. The average falsification rate and standard error (in parenthesis) is reported from 250 repetitions.}
    \label{tab:simulation_study_B1}
        \begin{tabular}{l|ll|ll|ll|ll}
\toprule
 & \multicolumn{4}{c}{No unmeasured confounder} & \multicolumn{4}{c}{Unmeasured confounder present} \\
Transportability & \multicolumn{2}{c}{Holds} & \multicolumn{2}{c}{Violated} & \multicolumn{2}{c}{Holds} & \multicolumn{2}{c}{Violated} \\
DGP & Cubic & Linear & Cubic & Linear & Cubic & Linear & Cubic & Linear \\
\midrule
MINT (Linear) & 0.62 (.03) & 0.01 (.01) & 0.65 (.03) & 0.05 (.01) & 0.60 (.03) & 1.00 (.00) & 0.53 (.03) & 1.00 (.00) \\
MINT (Cubic) & 0.00 (.00) & 0.02 (.01) & 0.04 (.01) & 0.06 (.01) & 1.00 (.00) & 1.00 (.00) & 1.00 (.00) & 1.00 (.00) \\
Transp. test (Pearson) & 0.69 (.03) & 0.04 (.01) & 0.68 (.03) & 0.81 (.02) & 0.65 (.03) & 0.75 (.03) & 0.71 (.03) & 0.86 (.02) \\
Transp. test (KCIT) & 0.05 (.01) & 0.07 (.02) & 0.38 (.03) & 0.43 (.03) & 0.16 (.02) & 0.24 (.03) & 0.34 (.03) & 0.42 (.03) \\
HGIC (Pearson) & 0.66 (.03) & 0.03 (.01) & 0.58 (.03) & 0.04 (.01) & 0.44 (.03) & 1.00 (.00) & 0.43 (.03) & 1.00 (.00) \\
HGIC (KCIT) & 0.02 (.01) & 0.02 (.01) & 0.11 (.02) & 0.02 (.01) & 0.76 (.03) & 0.99 (.01) & 0.35 (.03) & 0.70 (.03) \\
\bottomrule
\end{tabular}

\end{table}

\end{document}